\theoremstyle{plain}
\newtheorem{theorem}{Theorem}
\theoremstyle{definition}
\newtheorem{definition}{Definition}
\newtheorem{corollary}{Corollary}
\theoremstyle{remark}
\newtheorem{example}{Example}
\newtheorem*{example2}{Example}
\newtheorem*{setting}{Setting}
\newcommand{\argmin}{\mathop{\rm arg~min}\limits}
\begin{document}

	\title{A proper scoring rule for minimum information copulas}
%	\runtitle{A scoring rule for minimum information copulas}

		\author{Yici Chen\thanks{Graduate School of Information Science and Technology, The University of Tokyo, 7-3-1 Hongo, Bunkyo-ku, Tokyo, 113-8656, Japan.}
		\and
		Tomonari Sei\footnotemark[1] \thanks{sei@mist.i.u-tokyo.ac.jp}}

\maketitle
		
	\begin{abstract}
		Multi-dimensional distributions whose marginal distributions are uniform are called copulas. Among them, the one that satisfies given constraints on expectation and is closest to the independent distribution in the sense of \rm{Kullback--Leibler} divergence is called the minimum information copula. The density function of the minimum information copula contains a set of functions called the normalizing functions, which are often difficult to compute.
		Although a number of proper scoring rules for probability distributions having normalizing constants such as exponential families are proposed, these scores are not applicable to the minimum information copulas due to the normalizing functions.
%       When performing estimation, a function that measures the goodness of the model, called the score function, is often used. Different scores have different properties, and by using scores with properties consistent with the intended use, estimation accuracy can be improved and the amount of computation can be reduced.
		In this paper, we propose the conditional Kullback--Leibler score, which avoids computation of the normalizing functions.
		The main idea of its construction is to use pairs of observations.
% as a kind of U-statistics.
		We show that the proposed score is strictly proper in the space of copula density functions and therefore the estimator derived from it has asymptotic consistency.
		%the error can be as close to the true value as possible if a sufficiently large amount of data is taken.
		Furthermore, the score is convex with respect to the parameters and can be easily optimized by the gradient methods.
		\\
		{Keywords: Copula,
		Homogeneity,
		Kullback--Leibler divergence,
		Multi-point locality,
		Normalizing function}
	\end{abstract}
	
%	\begin{keyword}[class=MSC]
%		\kwd{62F10}
%		\kwd{62H05}
%%		\kwd[; secondary ]{60K35}
%	\end{keyword}

% Main text entry area

\section{Introduction}
Multidimensional distributions whose marginal distributions are uniform are called copulas. Among them, the one that satisfies the expectation constraints and is closest to the independent distribution in the sense of Kullback--Leibler (KL) divergence is called the minimum information copula. Minimum information copulas are used in, e.g., financial models \cite{CHATRABGOUN2018266} and flood models \cite{DANESHKHAH2016469}. It is known that the density function of a two-dimensional minimum information copula is written in the following form \cite{Bedford}:
\begin{align*}
&c_\theta(x,y)=\exp \left(\sum_{i=1}^k\theta_i h_i(x,y) +a_\theta(x)+b_\theta(y) \right),
\end{align*}
where $h_i(x,y)$ ($1\leq i\leq k$) are functions describing dependence between $x$ and $y$, and $\theta=(\theta_i)$ is a parameter.
%similar to the following exponential distribution family
The functions $a_\theta(x)$ and $b_\theta(y)$ are called normalizing functions that are determined by the marginal condition of copulas (see Section~\ref{section:min-info} for details).
The normalizing functions play a similar role to the normalizing constant of exponential families. However, since the marginal condition involves a set of integral equations, it is generally more difficult to deal with the normalizing functions than the normalizing constants.

When performing estimation, a function that measures the goodness of the model, called the score function, is often used.
Different scores have different properties, and by using scores with properties consistent with the intended use, estimation accuracy can be improved and the amount of computation can be reduced.
For example, scores for estimation that are robust to noise have been proposed by \cite{BHHJ} and \cite{FUJISAWA20082053} among others.
%In cases where it is difficult to calculate the normalizing constant of the model, 
The Hyv\"{a}rinen score \cite{hyvarinen2005estimation} has been proposed as a score without calculating the normalizing constant. A class of scoring rules with the same property are investigated by \cite{parry2012}.

In this paper, we propose a scoring rule for minimum information copulas that can be calculated without the normalizing functions.
The score, which we call the conditional Kullback--Leibler score, uses a conditional likelihood on pairs of observations.
The score is shown to be strictly proper and therefore asymptotically consistent. Furthermore, the score is convex with respect to the parameters and can be easily optimized by the gradient methods.

The structure of this paper is as follows. Section~2 introduces copulas and minimum information copulas. Section~3 defines scores and explains their commonly used properties: propriety, locality, and homogeneity. In Section~4, we introduce general homogeneity and multi-point locality, which are key properties for dealing with normalizing functions. Then, we propose the conditional Kullback--Leibler score
%a $2$-point local generally 0-homogeneous strictly proper score
for minimum information copulas that satisfies generalized homogeneity and two-point locality. In Section~5, we confirm the asymptotic consistency through numerical experiments. Finally, future issues and prospects are discussed in Section~6.

For simplicity, we will only discuss the two-dimensional case, but we believe it can be extended to three or higher dimensions. 

\section{Minimum information copulas} \label{section:min-info}
\subsection{Copulas}
A copula is a joint distribution function $C$ on the unit square with uniform marginals (e.g.\ \cite{Nels06}).
In this paper, we only deal with absolutely continuous copulas, in which case the definition is stated by density functions as follows.

\begin{definition}[Copula densities]
% 	A two-dimensional copula is a function $C:[0,1]^2 \to [0,1]$ with the following properties:
% 	\begin{enumerate}
% 		\item $C(x,0)=0=C(0,y) \hspace{2.6cm} \forall x,y \in [0,1]$;
% 		\item $C(x,1)=x$\; and \;$C(1,y)=y \hspace{1.4cm} \forall x,y \in [0,1]$;
% 		\item For all $ x_1,x_2,y_1,y_2 \in [0,1]$ such that $x_1 \leq x_2$ and $y_1 \leq y_2$,
% 		\begin{align*}
% 		C(x_2,y_2)-C(x_2,y_1)-C(x_1,y_2)+C(x_1,y_1) \geq 0.
% 		\end{align*}
% 	\end{enumerate}
	A two-dimensional copula density is a function $c:[0,1]^2 \to [0,\infty)$ that satisfies the following two properties:
	\begin{align}
	\int_0^1 c(x,y)\mathrm{d}y=1,\ \ x\in[0,1]
	\label{eq:marginal-1}
	\end{align}
	and
	\begin{align}
    \int_0^1 c(x,y)\mathrm{d}x=1,\ \ y\in[0,1].
    \label{eq:marginal-2}
	\end{align}
\end{definition}

From Sklar's theorem, a joint distribution with arbitrary marginals can be constructed by a copula.
Therefore, we can separate statistical modeling into the marginal part and the copula part.

\begin{theorem}[Sklar's theorem]
% 	Let $H$ be a joint distribution funtion with marginals $F$ and $G$. Then there exits a copula $C$ such that for all $x,y$ 
% 	\begin{align}
% 	H(x,y)=C(F(x),G(y)).
% 	\label{sklar}
% 	\end{align}
% 	If $F$ and $G$ are continuous, then $C$ is unique. Conversely, if $C$ is a copula and $F$ and $G$ are distribution functions, then the function $H$ defined by (\ref{sklar}) is a joint distribution function with marginals $F$ and $G$.
	Let $h$ be a joint density function on $\mathbb{R}^2$ with marginal density functions $f$ and $g$. Then there exists a copula density $c$ such that for all $(x,y)\in\mathbb{R}^2$,
	\begin{align*}
	h(x,y)=c(F(x),G(y))f(x)g(y),
%	\label{sklar}
	\end{align*}
	where $F(x)=\int_{-\infty}^x f(\xi)\mathrm{d}\xi$ and $G(y)=\int_{-\infty}^y g(\eta)\mathrm{d}\eta$.
\end{theorem}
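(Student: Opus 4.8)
The plan is to build the copula density $c$ explicitly out of $h$, $f$, and $g$, and then check the two marginal constraints \eqref{eq:marginal-1}--\eqref{eq:marginal-2} and the reconstruction identity by a change of variables. First I would introduce the quantile functions $F^{-1}(u)=\inf\{x\in\mathbb{R}:F(x)\ge u\}$ and $G^{-1}(v)=\inf\{y\in\mathbb{R}:G(y)\ge v\}$ and define
\begin{align*}
c(u,v)=\frac{h\bigl(F^{-1}(u),G^{-1}(v)\bigr)}{f\bigl(F^{-1}(u)\bigr)\,g\bigl(G^{-1}(v)\bigr)}
\end{align*}
on the set where both factors in the denominator are positive, extending $c$ by, say, the constant $1$ on the remaining set of arguments. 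Non-negativity of $c$ is immediate from $h,f,g\ge 0$.

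Next I would verify \eqref{eq:marginal-1}. Fixing $u$ with $f(F^{-1}(u))>0$ and substituting $v=G(y)$ — using the elementary fact that $G$ (being continuous) pushes the measure $g(y)\,\mathrm{d}y$ on $\mathbb{R}$ forward to Lebesgue measure on $[0,1]$, equivalently that $G(Y)$ is uniform when $Y$ has density $g$ — the integral $\int_0^1 c(u,v)\,\mathrm{d}v$ collapses to $f(F^{-1}(u))^{-1}\int_{\mathbb{R}} h(F^{-1}(u),y)\,\mathrm{d}y$, which equals $1$ because the inner marginal of $h$ is $f$ and the zero set of $g$ carries no mass. Condition \eqref{eq:marginal-2} then follows from the symmetric computation with $(u,F,f)$ and $(v,G,g)$ interchanged, so $c$ is a copula density.

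For the reconstruction identity I would split into cases. Where $f(x)>0$ and $g(y)>0$, one has $F^{-1}(F(x))=x$ and $G^{-1}(G(y))=y$, so substituting $(u,v)=(F(x),G(y))$ into the definition of $c$ yields $c(F(x),G(y))f(x)g(y)=h(x,y)$ directly. Where $f(x)=0$ or $g(y)=0$, the right-hand side is $0$, and $h(x,y)=0$ for almost every such $(x,y)$ as well, since $\iint_{\{f(x)=0\}}h(x,y)\,\mathrm{d}x\,\mathrm{d}y=\int_{\{f(x)=0\}}f(x)\,\mathrm{d}x=0$ and similarly for $g$; hence the identity holds.

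The main obstacle is the measure-theoretic bookkeeping around the non-invertibility of $F$ and $G$: the identity $F^{-1}(F(x))=x$ can fail on intervals where $f$ vanishes, and one must check that the exceptional sets of arguments where the quotient defining $c$ is indeterminate are Lebesgue-null. Under the regularity tacitly assumed for density functions here (for instance $f$ and $g$ continuous and positive on the interiors of their supports, which are intervals) these flat stretches do not occur on the support, the exceptional sets vanish, and the identity holds pointwise; isolating and phrasing these conditions cleanly is where the care is needed, while everything else is a routine change of variables.
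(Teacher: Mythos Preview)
The paper does not prove Sklar's theorem at all; it is quoted as a classical background result (with Nelsen's monograph as the implicit reference) and no argument is given. There is therefore nothing in the paper to compare your proposal against.

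On its own merits, your argument is the standard constructive proof and is essentially correct. Defining $c(u,v)=h(F^{-1}(u),G^{-1}(v))/\{f(F^{-1}(u))g(G^{-1}(v))\}$ and checking the marginals by the probability-integral-transform change of variables is exactly how the density version is usually established. The only places to be a bit more careful are the ones you already flag: the identity $G^{-1}(G(y))=y$ holds only for almost every $y$ with $g(y)>0$, and the claim $\int_{\{g=0\}}h(F^{-1}(u),y)\,\mathrm{d}y=0$ follows from Fubini only for almost every $u$, so the marginal conditions \eqref{eq:marginal-1}--\eqref{eq:marginal-2} and the reconstruction identity are obtained almost everywhere rather than pointwise. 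Since densities are only determined up to null sets this is enough, and your closing paragraph acknowledges precisely this bookkeeping, so the proposal is sound.
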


\subsection{Minimum information copulas}
Consider statistical modeling of a phenomenon with a multivariate data.
Suppose that some constraints, such as mean and variance, are given as prior information.
If there is little prior information about the distribution, the model that satisfies the constraints cannot be uniquely determined. 
In this case, which model should be adopted? 
One way to think of it is to adopt a neutral model that assumes as little as possible dependence between random variables that are not included in the prior information.
In other words, we adopt the model that is closest to the distribution in which the random variables are independent.

The Kullback-Leibler(KL) divergence \cite{10.1214/aoms/1177729694} is often used to measure the distance between distributions.

\begin{definition}[Kullback--Leibler divergence]
	\label{KL-D}
	Let $g_1$ and $g_2$ be density functions. Then the Kullback--Leiber (KL) divergence is defined as follows:
	\begin{align*}
	D_{\rm{KL}}(g_1||g_2)=\int g_1(x,y)\log \left(\frac{g_1(x,y)}{g_2(x,y)} \right) \mathrm{d}x\mathrm{d}y.
	\end{align*}
\end{definition}

The copula that is closest to the independent copula in terms of the KL divergence is called the minimum information copula. 

\begin{definition}[Minimum information copulas]
    Let $h_1(x,y),\ldots,h_k(x,y)$ be given functions and $\alpha_1,\ldots,\alpha_k\in\mathbb{R}$ be given numbers.
    Let $\pi(x,y)=1$ be the independent copula density.
	Then, the copula density $c$ that minimizes
	\begin{align*}
	D_{\rm{KL}}(c||\pi)
	= \iint c(x,y)\log c(x,y)\mathrm{d}x\mathrm{d}y
	\end{align*}
	subject to
	\begin{align*}
	\iint c(x,y)h_i(x,y)\mathrm{d}x\mathrm{d}y=\alpha_i \ \ (1\leq i \leq k)
	\end{align*}
	is called the minimum information copula density.
\end{definition}

%For simplicity, we will only discuss the two-dimensional case, but we believe it can be easily extended to three or more dimensions. 
The minimum information copula is characterized by the following theorem. See also \cite{Borwein_et_al1994} for details on the uniqueness and existence problem.

\begin{theorem}[\cite{Bedford}, Theorem 2 \& Theorem 3]
	\label{min inf copula}
The minimum information copula density is unique if it exists, and expressed in the following form:
	\begin{align*}
	&c(x,y)=\exp \left(\sum_{i=1}^k\theta_i h_i(x,y) +a(x)+b(y) \right)
	\end{align*}
	with some $\theta_i$, $a(x)$ and $b(y)$. The functions $a(x)$ and $b(y)$ are unique except for arbitrariness of the additive constants.
\end{theorem}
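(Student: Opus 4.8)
The plan is to treat the defining optimization as a convex problem (an $I$-projection of the independent copula onto an affine family of constraints) and to read off the three assertions—uniqueness of the minimizer, the exponential form, and uniqueness of $a,b$ up to additive constants—from, respectively, strict convexity, the first-order optimality condition, and an elementary separation-of-variables argument. Throughout, write $\Phi(c)=\iint c(x,y)\log c(x,y)\,\mathrm{d}x\,\mathrm{d}y=D_{\mathrm{KL}}(c\|\pi)$, and let $\mathcal{C}$ be the feasible set, i.e.\ the copula densities satisfying \eqref{eq:marginal-1}, \eqref{eq:marginal-2} and $\iint c\,h_i=\alpha_i$ for $1\le i\le k$. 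Since every constraint defining $\mathcal{C}$ is affine in $c$, the set $\mathcal{C}$ is convex.

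For uniqueness of the minimizer I would use the strict convexity of $t\mapsto t\log t$. If $c_1\neq c_2$ were both minimizers, then $\bar c=(c_1+c_2)/2\in\mathcal{C}$, and because $c_1\neq c_2$ on a set of positive Lebesgue measure, pointwise strict convexity (Jensen) gives $\Phi(\bar c)<\tfrac12\Phi(c_1)+\tfrac12\Phi(c_2)=\min_{\mathcal{C}}\Phi$, a contradiction. Hence the minimizer, if it exists, is unique. Existence is the genuinely delicate point: one checks that a sublevel set $\{\Phi\le M\}$ is uniformly integrable (the bound $\Phi(c)\le M$ is precisely a de la Vallée-Poussin condition with the superlinear function $t\log t$), hence weakly sequentially compact in $L^1$; that $\Phi$ is weakly lower semicontinuous; and that the constraints pass to weak limits when the $h_i$ are, say, bounded. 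This is where I would invoke the analysis of \cite{Borwein_et_al1994}.

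To obtain the exponential form I would perturb the minimizer. Consider $c_\varepsilon=c+\varepsilon\delta$ with $\delta$ ranging over the admissible directions $\mathcal{D}=\{\delta:\int\delta(x,y)\,\mathrm{d}y=0\ \forall x,\ \int\delta(x,y)\,\mathrm{d}x=0\ \forall y,\ \iint\delta\,h_i=0\ \forall i\}$; for such $\delta$ and $|\varepsilon|$ small, $c_\varepsilon\in\mathcal{C}$, provided $c$ is bounded away from $0$. Differentiating, the optimality of $c$ gives $\iint(\log c+1)\,\delta=0$ for every $\delta\in\mathcal{D}$, i.e.\ $\log c$ lies in the annihilator of $\mathcal{D}$ under the pairing $\langle f,g\rangle=\iint fg$. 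Since $\mathcal{D}$ is by construction the annihilator of the subspace $\mathcal{S}$ spanned by the functions of $x$ alone, the functions of $y$ alone, and $h_1,\dots,h_k$, it follows (after the usual density/boundedness reductions) that $\log c\in\mathcal{S}$, i.e.\ $\log c(x,y)=\sum_i\theta_i h_i(x,y)+a(x)+b(y)$, which is the claimed form; in particular $c>0$. The hard part is exactly the regularity I am glossing over—showing the optimal $c$ is essentially bounded and bounded below so the perturbation stays feasible and differentiation under the integral is licit, and that the annihilator relation $\mathcal{D}^{\perp}=\mathcal{S}$ holds, equivalently that the space $\{u(x)+v(y)\}$ is closed and sums well with $\mathrm{span}(h_i)$. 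These are the points treated rigorously in \cite{Bedford} and \cite{Borwein_et_al1994}; Lagrangian duality gives the multipliers $\theta_i$ and the normalizing functions directly and simultaneously yields existence.

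Finally, for uniqueness of $a$ and $b$: first note that if the $h_i$ are linearly independent modulo the space of functions of the form $u(x)+v(y)$ (which can always be arranged by reducing $k$), then $\theta$ is forced, since $\sum_i(\theta_i-\theta_i')h_i(x,y)=u(x)+v(y)$ would imply $\theta=\theta'$. Given $\theta$, if $\sum_i\theta_i h_i+a(x)+b(y)=\sum_i\theta_i h_i+\tilde a(x)+\tilde b(y)$, then $a(x)-\tilde a(x)=\tilde b(y)-b(y)$ for a.e.\ $(x,y)$; the left side depends only on $x$ and the right only on $y$, so both equal a constant $\kappa$, whence $\tilde a=a-\kappa$ and $\tilde b=b+\kappa$. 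This additive constant is the only indeterminacy, which is precisely the assertion of the theorem.
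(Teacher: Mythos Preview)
The paper does not prove this theorem; it is quoted from \cite{Bedford} (Theorems~2 and~3 there), with the existence question referred further to \cite{Borwein_et_al1994}. Your sketch is the standard convex-analytic route---strict convexity of $t\log t$ for uniqueness, first-order optimality (equivalently Lagrangian duality) for the exponential form, and the elementary separation-of-variables argument for the additive-constant indeterminacy---and this is essentially the approach taken in \cite{Bedford}. The technical gaps you flag (positivity and boundedness of the minimizer so that two-sided perturbations are admissible, the annihilator identity $\mathcal{D}^{\perp}=\mathcal{S}$, and weak compactness/lower semicontinuity for existence) are precisely the points those references address, so as a proof outline that defers the hard analysis to the cited sources, your proposal is sound and aligned with what the paper invokes.
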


From the theorem, we can redefine the minimum information copula density by
\begin{align}
&c_\theta(x,y)=\exp \left(\sum_{i=1}^k\theta_i h_i(x,y) +a_\theta(x)+b_\theta(y) \right)
\label{eq:min-info}
\end{align}
together with the marginal conditions (\ref{eq:marginal-1}) and (\ref{eq:marginal-2}).
 Here, the parameter of interest is $\theta=(\theta_i)_{i=1}^k$. The functions $a_\theta(x)$ and $b_\theta(y)$ are called the normalizing functions.

 For identifiability of $c_\theta(x,y)$, we suppose that $h_1(x,y),\ldots,h_k(x,y)$ are linearly independent modulo additive functions, which means that an identity
\[
 \sum_{i=1}^k \Theta_i h_i(x,y) + A(x) + B(y) = 0
\]
for some $\Theta_i$, $A(x)$ and $B(y)$ implies $\Theta_1=\cdots=\Theta_k=0$.

\section{Scoring rules} \label{section:scoring}
\subsection{Scores}
A score is a function used to measure the difference between a model and the true distribution.
We only consider probability density functions on $[0,1]^2$.
Denote the set of probability density functions on $[0,1]^2$ by $\mathcal{P}$.

\begin{definition}[Scores]
	A score $S$ is a real-valued function of $(x,y,q)\in[0,1]^2\times \mathcal{P}$. The expected value
	\begin{align*}
	S(p,q)=\iint S(x,y,q) p(x,y) \mathrm{d}x \mathrm{d}y
	\end{align*}
	with respect to $p\in\mathcal{P}$ is called the expected score.
\end{definition}

In this paper, the score function $S(x, y, q)$ and the expected score $S(p, q)$ use the same symbol $S$ and are both called scores for convenience. 
One property that is necessary for a score to measure the difference between the true distribution and the model is called propriety.

\begin{definition}[Propriety]
    A score $S$ is said to be proper if
	$S(p,q) \geq S(p,p)$
	for all density functions $p,q\in\mathcal{P}$.
\end{definition}

In general, when we say score, we often refer to the proper score. A divergence can be defined from a proper score by
%
% \begin{definition}[Divergences]
% 	Let $S$ be a proper score. Then, 
 	\begin{align*}
 	D(p||q)=S(p,q)-S(p,p),
 	\end{align*}
% 	is called the divergence of $S$.
%\end{definition}
%
%Let $p$ be the true distribution and $q$ be the model for a phenomenon.
which is like the distance between $p$ and $q$.
%If the model we set is the true distribution, $q = p$ and $D(p||q) = 0$.
%A divergence shifts the minimum value to 0. 
If $p$ is fixed, minimizing the divergence and minimizing the score are equivalent.  

Consider a statistical model $\{q_\theta\}\subset\mathcal{P}$ indexed by a parameter $\theta$.
If $n$ data points $(x_i , y_i )$ are observed and the empirical distribution is denoted as $\hat{p}$, the estimation can be operated as follows:
\begin{align}
\hat{\theta}%=\argmin_{\theta}D(\hat{p}||q_{\theta})
= \argmin_{\theta} S(\hat{p},q_{\theta})=\argmin_{\theta} \frac{1}{n}\sum^{n}_{i=1}S(x_i,y_i,q_{\theta}).
\label{eq:estimator}
\end{align}

There are many different types of scores, but one of the simplest is the local score.

% \begin{definition}[$\it{l}$-locality]
% 	Let $\it{l}$$>0$ be an integer and $\mathbb{Q}_l:=\mathbb{R}^+ \times \mathbb{R}^{\frac{1}{2}(l+2)(l+1)-1}$. Let $S:\mathbb{R}^2 \times \mathbb{Q}_{l} \rightarrow \mathbb{R}$ be a score.
% 	Let $s$ be a $C^{\infty}$ real value function and $q_{ij}:=\partial^i_x \partial^j_y \;q \;(i+j \leq l)$. Then, if score $S$ can be represented as
% 	\begin{align*}
% 	&S(x,y,q)=s\left( x,y,q_{00},q_{10},q_{01},\cdots,q_{ij},\cdots,q_{0l} \right)=s(x,y,\bm{q}) \\
% 	&\bm{q}:=(q_{00},q_{10},q_{01},\cdots,q_{ij},\cdots,q_{0l}),
% 	\end{align*}
% 	it is $\it{l}$-local and $\it{l}$ is the order of the score. ${0}$-local is called strictly local.
% \end{definition}
\begin{definition}[Locality \cite{parry2012}]
    A score $S$ is said to be local (in strict sense) if there exists a function $s:[0,1]^2\times[0,\infty)\to\mathbb{R}$ such that
    \[
    S(x,y,q)=s(x,y,q(x,y)).
    \]
    Furthermore, for $l\geq 0$, a score $S$ is said to be $l$-local if $S(x,y,q)$ is represented by at most the $l$-th derivatives of $q$ at $(x,y)$.
%There is the concept of $l$-locality for $l\geq 0$ as an extension of locality, which uses at most the $l$-th derivatives of $q$ (see \cite{parry2012}).
\end{definition}

Local scores are easy to calculate whenever $q(x,y)$ is explicitly expressed because the score can be obtained only from the information at that point, without integrating over the neighborhood or referring to other points.
%Since the set of strictly local scores is quite small and is not often used in discussions, $\it{l}$-locality is referred to as locality.

\begin{example}[\rm{KL} score] \label{example:KL}
    The score
	\begin{align*}
	S(x,y,q)=-\log q(x,y)
	\end{align*}
	is called the KL score because the divergence induced from it is the KL divergence.
The KL score is 0-local and proper.
In fact, it is known that the \rm{KL} score is essentially the only score that is 0-local and proper \cite{10.1214/aos/1176344689}.
\end{example}

\subsection{Homogeneity}
There are several computational advantages to using homogeneous scores for estimation.

% \begin{definition}[$\it{h}$-homogeneity]
% 	A score $S$ is said to be homogeneous of degree $h>0$, if, for any $c>0$, $S(x,y,cq)=c^h S(x,y,q)$.
% \end{definition}
\begin{definition}[Homogeneity]
	A score $S$ is said to be homogeneous if it satisfies $S(x,y,\lambda q)=S(x,y,q)$ for any constant $\lambda>0$.
\end{definition}

If a homogeneous score is used for estimation, computation of the normalizing constant is not necessary.

We have introduced the properties of scores: propriety, locality and homogeneity.
Based on these definitions, we consider two examples of scores.

\begin{example2}[Example~\ref{example:KL} continued]
The KL score is not homogeneous. Indeed, the KL score satisfies
\[
S(x,y,\lambda q)=S(x,y,q)-\log \lambda.
\]
Therefore, when the \rm{KL}-score is used for estimation, computation of the normalizing constant is necessary.
\end{example2}

%The second example is the \rm{Hyv\"{a}rinen} score.

\begin{example}[Hyv\"{a}rinen score \cite{hyvarinen2005estimation}]
A score
	\begin{align*}
	S(x,y,q)&=\left(\frac{\partial^2}{\partial x^2}+\frac{\partial^2}{\partial y^2}\right) \log q(x,y)\\
	&\ \ +\frac{1}{2}\left(\frac{\partial}{\partial x} \log q(x,y)\right)^2 +\frac{1}{2}\left(\frac{\partial}{\partial y} \log q(x,y)\right)^2
	\end{align*}
	is called the \rm{Hyv\"{a}rinen} score.
The \rm{Hyv\"{a}rinen} score is 2-local, homogeneous and proper. Therefore, the normalizing constant is not necessary for estimation.
However, the Hyv\"{a}rinen score is not useful for estimation of the minimum information copulas because it does not remove the normalizing functions.
\end{example}

%\section{A 2-points local generally 0-homoneneous strictly proper score}
\section{The proposed score}
\subsection{General homogeneity}

% From Theorem \ref{min inf copula}, if minimum information copulas exists, it can be represented in the following form:
% \begin{align*}
% &c(x,y)=\exp \left(\sum_{i}\theta_i h_i(x,y) +a(x)+b(y) \right).
% \end{align*}

As mentioned in the last example, the normalizing functions $a_\theta(x)$ and $b_\theta(y)$ in the density function (\ref{eq:min-info}) do not vanish even if a homogeneous score is applied.
For this reason, the following property is introduced.

\begin{definition}[General homogeneity]
	A score $S$ is said to be generally homogeneous if it satisfies that
	\begin{align*}
	&S(x,y,\lambda_1q)= S(x,y,q),\\
	&S(x,y,\lambda_2q)= S(x,y,q)
	\end{align*}
	for any positive functions $\lambda_1(x)$ and $\lambda_2(y)$,
	where $\lambda_1q$ and $\lambda_2q$ are defined by $(\lambda_1q)(x,y)=\lambda_1(x)q(x,y)$ and $(\lambda_2q)(x,y)=\lambda_2(y)q(x,y)$, respectively.
\end{definition}

\begin{example}
    It is easy to see that a score
    \[
    S(x,y,q)=-\frac{\partial^2}{\partial x\partial y}\log q(x,y)
    \]
    is generally homogeneous and 2-local. However, the score is not proper. To see this, let $q(x,y)$ be the Gaussian density (over $\mathbb{R}^2$).
    Then $S(x,y,q)$ is a constant involving the correlation parameter and $S(p,q)$ takes any real value independent of $p$.
\end{example}

If $S$ is generally homogeneous and $c_\theta(x,y)$ is the minimum information copula density in (\ref{eq:min-info}), we have
\[
 S(x,y,c_\theta(x,y))=S(x,y,e^{\sum_i\theta_ih_i(x,y)}),
\]
which does not require computation of the normalizing functions.
Hence, our problem is reduced to find a generally homogeneous ($l$-)local proper score.
However, after some trials based on symbolic computation in line with \cite{parry2012}, the authors realized that such a score may not exist.

%The reasons are explained in the supplementary materials. 
%supplementary materialを書くか迷ってます。本文に”拡張0-斉次局所適正スコアは存在しないと作者は思う”とだけ書いて終わるのも考えています。

\subsection{Multi-point scores and their locality}
Instead of finding a generally homogeneous local proper score, we try to relax the required properties.
General homogeneity and propriety are necessary for estimation of the minimum information copulas. 
Therefore, we reconsider locality. For this purpose, the concept of multi-point scores is introduced.

%So far, we have denoted scores as $S(x,y,q)$.
\begin{definition}[Multi-point score]
Let $m\geq 1$. An $m$-point score is a function
\[
S(\bm{x},\bm{y},q)
= S(x^1,y^1,\ldots,x^m,y^m,q)
\]
of $\bm{x}=(x^1,\cdots,x^m)\in[0,1]^m$, $\bm{y}=(y^1,\cdots,y^m)\in[0,1]^m$ and $q\in\mathcal{P}$. The expected score of $S(\bm{x},\bm{y},q)$ is defined as
	\begin{align*}
		S(p,q)&=\int S(\bm{x},\bm{y},q)p(\bm{x},\bm{y})\mathrm{d}\bm{x}\mathrm{d}\bm{y},
		\end{align*}
		where $p(\bm{x},\bm{y})= \prod_{i=1}^m p(x^i,y^i)$. In other words, the expectation is taken with respect to independent samples from $p$.
		A score is called a multi-point score if it is an $m$-point score for some $m$.
\end{definition}

Definition of propriety and general homogeneity of the multi-point scores is straightforward. Locality is defined as follows.

\begin{definition}[Multi-point locality]
    A multi-point score $S$ is said to be local if there exists a function $s:[0,1]^m\times [0,1]^m\times [0,\infty)^{m\times m}\to\mathbb{R}$ such that
    \[
     S(\bm{x},\bm{y},q) = s(\bm{x},\bm{y}, \bm{q}),
    \]
    where $\bm{q}=(q(x^i,y^j))_{i,j=1}^m$.
\end{definition}

Locality defined in Section~\ref{section:scoring} meant that for a given model, the score at a single point is evaluated using only the information at that point. On the other hand, multi-point locality means that the score at $m$ points is evaluated using all the information at the $m^2$ points $\{(x^i,y^j)\}_{i,j=1}^m$.
%Thus, a local score is multi-points ($1$-point) local, but a multi-points local score is not local in general.

%\subsection{A 2-point local generally homogeneous proper score}
\subsection{The conditional Kullback--Leibler score}
%Thus, if we weaken the locality, the set of scores will expand, and we may be able to make scores with desirable properties.
In the following, we construct a generally homogeneous $2$-point local proper score, which is applicable to estimation of minimum information copulas.
%This score is proper, not only for minimal information copulas, but also for all density functions.
Since it is known that the $1$-point local proper score is essentially only the \rm{KL}-score \cite{10.1214/aos/1176344689}, which does not have general homogeneity, it is reasonable to construct $2$-point local proper scores. 
%quite tight to find the original sense locality and propriety at the same time when making the scores. The reason why $\it{l}$-locality $\rm{\cite{parry2012}}$ was proposed was also to expand the set of scores.
%The multi-point locality proposed in this paper is also an extension because the set of scores with both $\it{l}$-locality and propriety could not produce scores with the other desired properties.

\begin{definition}[The conditional Kullback--Leibler score]
    Define a 2-point local score by
    \begin{align*}
		S(x^1,y^1,x^2,y^2,q)=-\log \left( \frac{q^{11} q^{22}}{q^{11}q^{22}+q^{12} q^{21}} \right),
	\end{align*}
	where $q^{ij}=q(x^i,y^j)$.
	We call it the conditional Kullback--Leibler score.
\end{definition}

This score looks a little strange at first glance, but it can actually be seen as a kind of conditional \rm{KL} score as follows.

Consider that there are two systems that are exactly the same and independent of each other.
Data $(x^1,y^1)$ are obtained from System~1, and data $(x^2,y^2)$ are obtained from System~2.
Suppose that, by mistake, we were able to record the numerical values of the data, but forgot the correspondence between $x$ and $y$.
Then, there are two possible cases: $(x^1,y^1),(x^2,y^2)$ or $(x^1,y^2),(x^2,y^1)$. 
Under the condition that only the numerical value of the data is known, the conditional probability that the data is the pair $(x^1,y^1),(x^2,y^2)$ is
	\begin{align*}
		\frac{q^{11} q^{22}}{q^{11}q^{22}+q^{12} q^{21}}.
	\end{align*}
% If data $(x^1,y^1)$ are obtained from System~1, and data $(x^2,y^2)$ are obtained from System~2, the \rm{KL}-score is 
% \begin{align*}
% 	S_{\rm{KL}}=-\log q^{11}q^{22}.
% \end{align*}
Therefore,
%if only the numerical value of the data is known, 
the score 
\begin{align*}
S=-\log \left( \frac{q^{11} q^{22}}{q^{11}q^{22}+q^{12} q^{21}} \right)
\end{align*}
will come naturally as the KL score for the conditional probability.

Now we state our main result.

\begin{theorem} \label{theorem:main}
    The conditional KL score is generally homogeneous and proper.
\end{theorem}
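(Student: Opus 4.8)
The plan is to verify the two assertions separately. General homogeneity is the easy part: if we replace $q(x,y)$ by $\lambda_1(x)q(x,y)$, then $q^{ij}$ becomes $\lambda_1(x^i)q^{ij}$, so each product $q^{11}q^{22}$ and $q^{12}q^{21}$ picks up the same factor $\lambda_1(x^1)\lambda_1(x^2)$, which cancels in the ratio inside the logarithm. The same cancellation occurs for $\lambda_2(y)q(x,y)$, with the common factor $\lambda_2(y^1)\lambda_2(y^2)$. Hence $S(x^1,y^1,x^2,y^2,\lambda_i q)=S(x^1,y^1,x^2,y^2,q)$ for $i=1,2$, which is exactly general homogeneity for a $2$-point score.

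For propriety I would argue via the conditional-likelihood interpretation sketched just before the theorem. Fix the true density $p\in\mathcal P$ and draw two independent pairs $(X^1,Y^1),(X^2,Y^2)\sim p$. Introduce a latent label $Z$ that records which of the two matchings is the correct one; conditionally on the unordered data, the true posterior probability of the correct matching is $\frac{p^{11}p^{22}}{p^{11}p^{22}+p^{12}p^{21}}$, by Bayes' rule applied to the two equally likely hypotheses. The expected score $S(p,q)$ is then, up to the (model-independent) expectation over the unordered data, the cross-entropy between the true Bernoulli posterior and the candidate Bernoulli posterior $\frac{q^{11}q^{22}}{q^{11}q^{22}+q^{12}q^{21}}$. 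Since cross-entropy is minimized (for each fixed conditioning event) exactly when the two Bernoulli parameters agree, Gibbs' inequality gives $S(p,q)\ge S(p,p)$, i.e. propriety. Concretely, writing $r_p$ and $r_q$ for the two posterior probabilities, the integrand difference $S(x,y,q)-S(x,y,p)$ equals $D_{\mathrm{KL}}\!\bigl(\mathrm{Ber}(r_p)\,\|\,\mathrm{Ber}(r_q)\bigr)\ge 0$ pointwise after the symmetrization over the two matchings, and integrating against $p(\bm x,\bm y)$ preserves the inequality.

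The one point that needs care — and which I expect to be the main obstacle — is setting up the symmetrization correctly so that the pointwise KL bound really appears. The score $S(x^1,y^1,x^2,y^2,q)$ as written is \emph{not} symmetric under swapping $Y^1\leftrightarrow Y^2$: that swap sends $r_q\mapsto 1-r_q$. So one must pair each sample point $(x^1,y^1,x^2,y^2)$ with its swapped partner $(x^1,y^2,x^2,y^1)$, note that $p$ assigns them weights $p^{11}p^{22}$ and $p^{12}p^{21}$ respectively (relative to the common factor), and observe that $S$ evaluated at these two points is $-\log r_q$ and $-\log(1-r_q)$. Averaging the two contributions with the correct $p$-weights reconstructs exactly $r_p(-\log r_q)+(1-r_p)(-\log(1-r_q))$, whose minimum over $r_q$ is at $r_q=r_p$ with value the binary entropy of $r_p$. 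Strictness of the inequality, if one wants strict propriety on copula densities, then follows from strict convexity of $-\log$ together with the identifiability hypothesis on the $h_i$; but the statement as given only claims propriety, so the non-strict Gibbs inequality suffices.
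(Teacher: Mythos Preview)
Your proposal is correct and follows essentially the same route as the paper: both arguments hinge on the symmetrization that pairs $(x^1,y^1,x^2,y^2)$ with its $y$-swapped partner $(x^1,y^2,x^2,y^1)$, weighted by $p^{11}p^{22}$ and $p^{12}p^{21}$, and then apply Gibbs' inequality (the paper writes it as $\log x\le x-1$ and checks the cancellation explicitly). Your packaging via the Bernoulli cross-entropy $r_p(-\log r_q)+(1-r_p)(-\log(1-r_q))$ is a bit cleaner conceptually, but the underlying computation is identical to the paper's.
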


\begin{proof}[Proof]
    General homogeneity is straightforward.
    % \[
    % \frac{(\lambda_1^1q^{11})(\lambda_1^2q^{22})}{(\lambda_1^1q^{11})(\lambda_1^2q^{22})+(\lambda_1^1q^{12})(\lambda_1^2q^{21})} = \frac{q^{11} q^{22}}{q^{11}q^{22}+q^{12} q^{21}}
    % \]
    % where $\lambda_1^i=\lambda_1(x^i)$.
    We prove the propriety as follows:
 	\begin{align*}
        &S(p,p)-S(p,q)\\
        =&\int_{0}^{1}\int_{0}^{1}\int_{0}^{1}\int_{0}^{1}[S(x^1,y^1,x^2,y^2,p)-S(x^1,y^1,x^2,y^2,q)]p^{11}p^{22}\mathrm{d}x^1 \mathrm{d}y^1 \mathrm{d}x^2 \mathrm{d}y^2\\
        =&\frac{1}{2} \int_{0}^{1} \cdots \int_{0}^{1} [S(x^1,y^1,x^2,y^2,p)-S(x^1,y^1,x^2,y^2,q)]p^{11}p^{22}\mathrm{d}x^1 \mathrm{d}y^1 \mathrm{d}x^2 \mathrm{d}y^2 \\
        +&\frac{1}{2} \int_{0}^{1} \cdots \int_{0}^{1} [S(x^1,y^2,x^2,y^1,p)-S(x^1,y^2,x^2,y^1,q)]p^{12}p^{21}\mathrm{d}x^1 \mathrm{d}y^2 \mathrm{d}x^2 \mathrm{d}y^1 \\
        =&\frac{1}{2} \int_{0}^{1} \cdots \int_{0}^{1} \left[ \log \left(\frac{q^{11}q^{22}}{q^{11}q^{22}+q^{12}q^{21}}\right)-\log \left( \frac{p^{11}p^{22}}{p^{11}p^{22}+p^{12}p^{21}} \right) \right]p^{11}p^{22}\mathrm{d}x^1 \mathrm{d}y^1 \mathrm{d}x^2 \mathrm{d}y^2 \\
 	    +&\frac{1}{2} \int_{0}^{1} \cdots \int_{0}^{1} \left[ \log \left(\frac{q^{12}q^{21}}{q^{11}q^{22}+q^{12}q^{21}}\right)-\log \left( \frac{p^{12}p^{21}}{p^{11}p^{22}+p^{12}p^{21}} \right) \right]p^{12}p^{21}\mathrm{d}x^1 \mathrm{d}y^2 \mathrm{d}x^2 \mathrm{d}y^1 \\
        =&\frac{1}{2} \int_{0}^{1} \cdots \int_{0}^{1} \left[ \log \frac{\frac{q^{11}q^{22}}{q^{11}q^{22}+q^{12}q^{21}}}{\frac{p^{11}p^{22}}{p^{11}p^{22}+p^{12}p^{21}}} \right]\frac{p^{11}p^{22}}{p^{11}p^{22}+p^{12}p^{21}} (p^{11}p^{22}+p^{12}p^{21}) \mathrm{d}x^1 \mathrm{d}y^1 \mathrm{d}x^2 \mathrm{d}y^2 \\
        +&\frac{1}{2} \int_{0}^{1} \cdots \int_{0}^{1} \left[ \log \frac{\frac{q^{12}q^{21}}{q^{11}q^{22}+q^{12}q^{21}}}{\frac{p^{12}p^{21}}{p^{11}p^{22}+p^{12}p^{21}}} \right]\frac{p^{12}p^{21}}{p^{11}p^{22}+p^{12}p^{21}} (p^{11}p^{22}+p^{12}p^{21}) \mathrm{d}x^1 \mathrm{d}y^2 \mathrm{d}x^2 \mathrm{d}y^1.
     \end{align*}
    Using the inequality $\log x \leq x-1$, we obtain
	\begin{align*}
	&S(p,p)-S(p,q) \\
	\leq&\frac{1}{2} \int_{0}^{1} \cdots \int_{0}^{1} \left[ \frac{\frac{q^{11}q^{22}}{q^{11}q^{22}+q^{12}q^{21}}}{\frac{p^{11}p^{22}}{p^{11}p^{22}+p^{12}p^{21}}}-1 \right]\frac{p^{11}p^{22}}{p^{11}p^{22}+p^{12}p^{21}} (p^{11}p^{22}+p^{12}p^{21}) \mathrm{d}x^1 \mathrm{d}y^1 \mathrm{d}x^2 \mathrm{d}y^2 \\
	+&\frac{1}{2} \int_{0}^{1} \cdots \int_{0}^{1} \left[ \frac{\frac{q^{12}q^{21}}{q^{11}q^{22}+q^{12}q^{21}}}{\frac{p^{12}p^{21}}{p^{11}p^{22}+p^{12}p^{21}}}-1 \right]\frac{p^{12}p^{21}}{p^{11}p^{22}+p^{12}p^{21}} (p^{11}p^{22}+p^{12}p^{21}) \mathrm{d}x^1 \mathrm{d}y^2 \mathrm{d}x^2 \mathrm{d}y^1 \\
	=&\frac{1}{2} \int_{0}^{1} \cdots \int_{0}^{1} \left[ \frac{q^{11}q^{22}}{q^{11}q^{22}+q^{12}q^{21}}-\frac{p^{11}p^{22}}{p^{11}p^{22}+p^{12}p^{21}} \right] (p^{11}p^{22}+p^{12}p^{21}) \mathrm{d}x^1 \mathrm{d}y^1 \mathrm{d}x^2 \mathrm{d}y^2 \\
	+&\frac{1}{2} \int_{0}^{1} \cdots \int_{0}^{1} \left[ {\frac{q^{12}q^{21}}{q^{11}q^{22}+q^{12}q^{21}}}-{\frac{p^{12}p^{21}}{p^{11}p^{22}+p^{12}p^{21}}} \right](p^{11}p^{22}+p^{12}p^{21}) \mathrm{d}x^1 \mathrm{d}y^2 \mathrm{d}x^2 \mathrm{d}y^1 \\
	=&\frac{1}{2}\int_{0}^{1} \cdots \int_{0}^{1} \left[ \frac{q^{11}q^{22}}{q^{11}q^{22}+q^{12}q^{21}}-\frac{p^{11}p^{22}}{p^{11}p^{22}+p^{12}p^{21}}+{\frac{q^{12}q^{21}}{q^{11}q^{22}+q^{12}q^{21}}}-{\frac{p^{12}p^{21}}{p^{11}p^{22}+p^{12}p^{21}}} \right]\\
	&\hspace{8cm} \times (p^{11}p^{22}+p^{12}p^{21}) \mathrm{d}x^1 \mathrm{d}y^1 \mathrm{d}x^2 \mathrm{d}y^2 \\
	=&\frac{1}{2}\int_{0}^{1} \cdots \int_{0}^{1} \left[ 1-1 \right](p^{11}p^{22}+p^{12}p^{21}) \mathrm{d}x^1 \mathrm{d}y^1 \mathrm{d}x^2 \mathrm{d}y^2	\\
    =&0.
    \end{align*}
    Therefore, 
	\begin{align*}
        S(p,p)\leq S(p,q).
    \end{align*}
	As a result, the score $S$ is proper. Moreover, the equality condition is
	\begin{align}
	\frac{q^{11}q^{22}}{q^{11}q^{22}+q^{12}q^{21}}=\frac{p^{11}p^{22}}{p^{11}p^{22}+p^{12}p^{21}}
	\label{eq:equality-condition}
	\end{align}
	for all $(x^1,y^1,x^2,y^2)$.
\end{proof}

From the above, we construct a generally homogeneous $2$-point local proper score for the general density function. In fact, if the target is restricted to minimum information copulas, this score has even stronger properties.

\begin{definition}[Strict propriety]
    Let $\mathcal{M}\subset\mathcal{P}$ be a given class of probability density functions. A proper score $S$ is said to be strictly proper relative to $\mathcal{M}$ if the equality $S(p,q) = S(p,p)$ for $p,q\in\mathcal{M}$ implies $p=q$.
\end{definition}

\begin{theorem} \label{theorem:strictly-proper}
    Let $\mathcal{M}$ be a minimum information copula model.
    Then the conditional KL score is strictly proper relative to $\mathcal{M}$.
% 	For a minimum information copula $q(x,y)$, score 
% 	\begin{align*}
% 	S(x^1,y^1,x^2,y^2,q)=-\log \left( \frac{q^{11} q^{22}}{q^{11}q^{22}+q^{12} q^{21}} \right)
% 	\end{align*}
% 	is strictly proper. 
\end{theorem}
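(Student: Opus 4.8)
The plan is to build directly on Theorem~\ref{theorem:main}. Since $\mathcal{M}$ is a minimum information copula model, any $p,q\in\mathcal{M}$ are of the form $p=c_\theta$ and $q=c_{\theta'}$ as in (\ref{eq:min-info}), with their own normalizing functions. By Theorem~\ref{theorem:main}, $S(p,q)=S(p,p)$ forces the equality condition (\ref{eq:equality-condition}) to hold for all $(x^1,y^1,x^2,y^2)\in[0,1]^4$, so the whole task is to deduce $\theta=\theta'$ (hence $p=q$) from (\ref{eq:equality-condition}) together with the exponential shape of $c_\theta$ and $c_{\theta'}$.

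First I would simplify (\ref{eq:equality-condition}). All the values $p^{ij},q^{ij}$ are strictly positive because $c_\theta$ and $c_{\theta'}$ are exponentials; since $t\mapsto t/(t+1)$ is injective on $(0,\infty)$, (\ref{eq:equality-condition}) is equivalent to
\[
\frac{q^{11}q^{22}}{q^{12}q^{21}}=\frac{p^{11}p^{22}}{p^{12}p^{21}}
\quad\text{for all }(x^1,y^1,x^2,y^2).
\]
Taking logarithms and inserting $\log c_\theta(x,y)=\sum_i\theta_ih_i(x,y)+a_\theta(x)+b_\theta(y)$, the normalizing functions cancel in the mixed difference $\log(\cdot)^{11}+\log(\cdot)^{22}-\log(\cdot)^{12}-\log(\cdot)^{21}$ for both $p$ and $q$, so the identity reduces to
\[
\sum_{i=1}^k(\theta_i-\theta'_i)\bigl(h_i(x^1,y^1)+h_i(x^2,y^2)-h_i(x^1,y^2)-h_i(x^2,y^1)\bigr)=0
\]
for all $(x^1,y^1,x^2,y^2)\in[0,1]^4$.

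Next I would freeze a reference pair $(x^2,y^2)=(x_0,y_0)$ and rename $(x^1,y^1)=(x,y)$. Setting $A(x):=\sum_i(\theta_i-\theta'_i)\bigl(h_i(x_0,y_0)-h_i(x,y_0)\bigr)$ and $B(y):=-\sum_i(\theta_i-\theta'_i)h_i(x_0,y)$, the displayed identity becomes exactly $\sum_i(\theta_i-\theta'_i)h_i(x,y)+A(x)+B(y)=0$ for all $(x,y)\in[0,1]^2$. This is an instance of the identity appearing in the identifiability hypothesis with $\Theta_i=\theta_i-\theta'_i$, so that hypothesis (linear independence modulo additive functions) gives $\theta_i=\theta'_i$ for every $i$. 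Finally, since (\ref{eq:min-info}) together with the marginal conditions (\ref{eq:marginal-1})--(\ref{eq:marginal-2}) determines $c_\theta$ uniquely from $\theta$ (by Theorem~\ref{min inf copula}, the additive-constant freedom in $a_\theta$ and $b_\theta$ cancelling in the sum $a_\theta(x)+b_\theta(y)$), $\theta=\theta'$ yields $p=c_\theta=c_{\theta'}=q$, which is exactly strict propriety relative to $\mathcal{M}$.

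The calculations here are routine; the one step carrying the real content is the passage from the pointwise identity (\ref{eq:equality-condition}) to an application of the identifiability assumption. The key observation making it work is that the ``two-point mixed difference'' $h_i(x^1,y^1)+h_i(x^2,y^2)-h_i(x^1,y^2)-h_i(x^2,y^1)$ annihilates every additive term $a(x^i)+b(y^j)$, so the normalizing functions of both $p$ and $q$ disappear and only $\theta-\theta'$ survives; freezing one of the two points then converts this back into a plain additive-decomposition statement to which the hypothesis directly applies. The only subtlety I would be careful to record is that positivity of $c_\theta$ on all of $[0,1]^2$ — needed both to take logarithms and to pick the reference pair $(x_0,y_0)$ — is automatic from the exponential form.
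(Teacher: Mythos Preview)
Your proof is correct and follows essentially the same route as the paper: reduce the equality condition (\ref{eq:equality-condition}) to the cross-ratio identity, freeze one of the two points, and read off an additive decomposition. The only cosmetic difference is in the final step: you invoke the identifiability hypothesis on the $h_i$ to force $\theta=\theta'$, whereas the paper writes the relation as $q=p\,\exp(a+b)$ and appeals to the uniqueness of normalizing functions in Theorem~\ref{min inf copula}; both closures are valid and amount to the same thing.
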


\begin{proof}[Proof]
%    Propriety has been shown in Theorem~\ref{theorem:main}.
%    We prove $S(p,q)=S(p,p)$ implies $p=q$.
    Let $p,q\in\mathcal{M}$ and suppose that $S(p,q)=S(p,p)$.
    Then we have (\ref{eq:equality-condition}) in the proof of Theorem~\ref{theorem:main},
%     we have \begin{align*}
% 		\frac{q^{11}q^{22}}{q^{11}q^{22}+q^{12}q^{21}}=\frac{p^{11}p^{22}}{p^{11}p^{22}+p^{12}p^{21}},
% 	\end{align*}
	which is equivalent to
	\begin{align*}
        \frac{q^{11}q^{22}}{q^{12}q^{21}}=\frac{p^{11}p^{22}}{p^{12}p^{21}},
	\end{align*}
	that is,
	\begin{align*}
		\frac{q(x^1,y^1)q(x^2,y^2)}{q(x^1,y^2)q(x^2,y^1)}=\frac{p(x^1,y^1)p(x^2,y^2)}{p(x^1,y^2)p(x^2,y^1)}.
	\end{align*}
	By fixing $(x^1,y^1)$ to an arbitrary point, we obtain a relation
	\[
	q(x^2,y^2) = p(x^2,y^2)\exp(a(x^2)+b(y^2)),
	\]
	where $a(x^2)=\log(p^{11}q^{21})-\log(q^{11}p^{21})$ and $b(y^2)=\log q^{12}-\log p^{12}$.
% 	Because $q,p$ are minimum information copulas, with the theorem $\ref{min inf copula}$, 
% 	\begin{align*}
% 		&q(x,y)=\exp\left\{ \sum_{i}\theta_i h_i(x,y)+a(x)+b(y)  \right\}=:\exp\left\{ H(x,y)+a(x)+b(y)  \right\}
% 	\end{align*}
% 	and
% 	\begin{align*}
% 		&p(x,y)=\exp\left\{ \sum_{i}\gamma_i k_i(x,y)+\alpha(x)+\beta(y) \right\}=:\exp \left\{K(x,y)+\alpha(x)+\beta(y)  \right\}.
% 	\end{align*}
% 	We consider that the parameters $\theta_i$ and $\gamma_i$ are included in the function $H(x.y)$ and $K(x,y)$ as constants.
% 	Therefore, with the equal condition, 
% 	\begin{align*}
% 		&\exp \left\{ H(x^1,y^2)+H(x^2,y^1)-H(x^1,y^1)-H(x^2,y^2) \right\}\\
% 		=&\exp \left\{ K(x^1,y^2)+K(x^2,y^1)-K(x^1,y^1)-K(x^2,y^2) \right\}\\
% 		\iff \\
% 		& H(x^1,y^2)+H(x^2,y^1)-H(x^1,y^1)-H(x^2,y^2)\\
% 		=& K(x^1,y^2)+K(x^2,y^1)-K(x^1,y^1)-K(x^2,y^2).
% 	\end{align*}
% 	With $x^1=x$, $y^1=y$, $x^2=0$, and $y^2=0$, 
% 	\begin{align*}
% 	& H(x,0)+H(0,y)-H(x,y)-H(0,0)\\
% 	=& K(x,0)+K(0,y)-K(x,y)-K(0,0).
% 	\end{align*}
% 	Operating $\partial_x \partial_y$ to each sides, 
% 	\begin{align*}
% 		\partial_x \partial_y H(x,y)=\partial_x \partial_y K(x,y) \hspace{1cm} (\forall x,y).
% 	\end{align*}
% 	Therefore, 
% 	\begin{align*}
% 		H(x,y)=K(x,y)+f_1(x)+f_2(y)+const \hspace{1cm} (\forall x,y).
% 	\end{align*}
    Since $p$ and $q$ are assumed to be the minimum information copula densities,
	Theorem~$\ref{min inf copula}$ implies
	\begin{align*}
		p(x^2,y^2)=q(x^2,y^2).
	\end{align*}
	Therefore, the score is strictly proper relative to $\mathcal{M}$.
\end{proof}

\subsection{Properties of the estimator}

We define an estimator based on the proposed score and briefly describe its properties.

For the conditional KL score, we first separate the given data randomly into $N=\lfloor n/2\rfloor$ groups as
\[
 \{(x_i^1,y_i^1,x_i^2,y_i^2)\}_{i=1}^N.
\]
Then, based on the empirical score
\[
 \hat{S}(\theta) =\frac{1}{N}\sum_{i=1}^{N}S(x^1_i,y^1_i,x^2_i,y^2_i,q),
\]
the estimator is defined by
\begin{align*}
 \hat\theta = \argmin_{\theta}\hat{S}(\theta)
\end{align*}
%Optimization of $\hat{S}$ can be easily done using the gradient method. 

Recall the following theorem on consistency and asymptotic normality of estimators based on strictly proper scores. We omit regularity conditions to make the ideas clearer.

\begin{theorem}[\cite{doi:10.1198/016214506000001437} and Theorem~5.23 of \cite{van2000asymptotic}]
	\label{strictly_proper}
	Let $\theta_0$ be the true parameter and suppose that $(x_1,y_1),\cdots,(x_n,y_n)$ are independent and identically distributed. Let $S$ be a strictly proper (1-point) score.
	Then, the estimator $\hat\theta$ defined by (\ref{eq:estimator}) converges almost surely to $\theta_0$ as $n\to\infty$.
% 	\begin{align*}
% 		\hat\theta = \argmin_{\theta} \frac{1}{n}\sum_{i=1}^{n}S(x_i,y_i,q_{\theta}) \xrightarrow{\rm a.s.} \theta_0
% 	\end{align*}
	Furthermore, under regularity conditions, the asymptotic normality holds:
	\[
	\sqrt{n}(\hat\theta-\theta_0) \xrightarrow{\rm d} N(0,J^{-1}VJ^{-1}),
	\]
	where $J=E[\nabla_\theta\nabla_\theta^\top S]$ and $V=E[(\nabla_\theta S)(\nabla_\theta S)^\top]$.
\end{theorem}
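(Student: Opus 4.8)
The plan is to recognize $\hat\theta$ as a minimum-contrast estimator (an M-estimator) and to reduce the two claims to the classical M-estimation results being cited, verifying their hypotheses rather than reproving them from scratch. Write $M_n(\theta)=\frac1n\sum_{i=1}^n S(x_i,y_i,q_\theta)$ for the empirical criterion and $M(\theta)=S(q_{\theta_0},q_\theta)=E[S(x,y,q_\theta)]$ for its population version, where the expectation is under the true density $q_{\theta_0}$. Then $\hat\theta=\argmin_\theta M_n(\theta)$, and the single input we draw from the scoring-rule theory is that propriety gives $M(\theta)\ge M(\theta_0)$ for all $\theta$, with strict propriety (assumed here; established for the conditional KL score in Theorem~\ref{theorem:strictly-proper}) together with identifiability of $\theta\mapsto q_\theta$ making $\theta_0$ the \emph{unique} minimizer of $M$. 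Everything else is the generic machinery of \cite{van2000asymptotic}, which is why the result is attributed to Theorem~5.23 there.

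For consistency I would combine the pointwise strong law of large numbers, $M_n(\theta)\to M(\theta)$ a.s.\ for each $\theta$, with the (suppressed) regularity hypotheses — continuity of $\theta\mapsto S(x,y,q_\theta)$ and an integrable envelope over a compact parameter space — to obtain the uniform convergence $\sup_\theta|M_n(\theta)-M(\theta)|\to 0$ almost surely. Strict propriety together with continuity and compactness makes $\theta_0$ a well-separated minimizer, $\inf_{\|\theta-\theta_0\|\ge\varepsilon}M(\theta)>M(\theta_0)$ for every $\varepsilon>0$. The standard argmin-consistency lemma then yields $\hat\theta\to\theta_0$ almost surely.

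For asymptotic normality I would start from the interior first-order condition $\nabla_\theta M_n(\hat\theta)=0$, valid eventually by consistency, and expand $\nabla_\theta M_n$ about $\theta_0$:
\[
0=\frac1n\sum_{i=1}^n\nabla_\theta S(x_i,y_i,q_{\theta_0})+\Bigl(\frac1n\sum_{i=1}^n\nabla_\theta\nabla_\theta^\top S(x_i,y_i,q_{\tilde\theta})\Bigr)(\hat\theta-\theta_0),
\]
with $\tilde\theta$ on the segment joining $\hat\theta$ and $\theta_0$. The Hessian average converges in probability to $J=E[\nabla_\theta\nabla_\theta^\top S]$ by the law of large numbers and consistency of $\tilde\theta$. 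Crucially, differentiating the population stationarity of $M$ at $\theta_0$ (i.e.\ interchanging derivative and integral in $\nabla_\theta M(\theta_0)=0$) gives $E[\nabla_\theta S(x,y,q_{\theta_0})]=0$, so the central limit theorem furnishes $n^{-1/2}\sum_i\nabla_\theta S(x_i,y_i,q_{\theta_0})\xrightarrow{\rm d}N(0,V)$ with $V=E[(\nabla_\theta S)(\nabla_\theta S)^\top]$. Solving for $\sqrt n(\hat\theta-\theta_0)$ and applying Slutsky's lemma with the invertibility of $J$ produces the sandwich law $\sqrt n(\hat\theta-\theta_0)\xrightarrow{\rm d}N(0,J^{-1}VJ^{-1})$.

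The main obstacle lies not in this probabilistic skeleton but in the regularity conditions the statement deliberately suppresses. The two delicate verifications are (i) the uniform law of large numbers underpinning consistency, which demands an integrable dominating envelope for the family $\{S(\cdot,\cdot,q_\theta)\}$ and equicontinuity in $\theta$, and (ii) the differentiation-under-the-integral steps — both to certify $E[\nabla_\theta S|_{\theta_0}]=0$ and to control the Taylor remainder uniformly near $\theta_0$ — together with the assumed existence and nonsingularity of $J$. For the application to the conditional KL score these must be checked after the grouping that recasts the $2$-point score as a $1$-point score on pairs, since only then does the i.i.d.\ structure required by the theorem hold; once verified, the argument above applies verbatim.
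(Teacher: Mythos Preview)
The paper does not supply its own proof of this theorem: it is stated as a known result, attributed to \cite{doi:10.1198/016214506000001437} and Theorem~5.23 of \cite{van2000asymptotic}, with the explicit remark that regularity conditions are omitted. Your sketch is a correct outline of the standard M-estimation argument underlying those references---uniform law of large numbers plus well-separatedness for consistency, and a Taylor expansion of the estimating equation combined with the CLT and Slutsky for the sandwich asymptotic variance---so there is nothing to compare against beyond noting that you have reconstructed the classical route rather than merely citing it.
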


Since the conditional KL score is strictly proper as proved in Theorem~\ref{theorem:strictly-proper}, we obtain the following corollary.

\begin{corollary} \label{corollary:asymptotic}
 The estimator based on the conditional KL score is consistent and asymptotically normal.
\end{corollary}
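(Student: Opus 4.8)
The plan is to reduce the statement to Theorem~\ref{strictly_proper} by treating each group of two observations as a single data point on a product sample space. Put $Z_i=(x_i^1,y_i^1,x_i^2,y_i^2)\in([0,1]^2)^2$ for $i=1,\dots,N$. Since the grouping is drawn independently of the data and the original observations are i.i.d.\ from $p_{\theta_0}=c_{\theta_0}$, the vectors $Z_1,\dots,Z_N$ are i.i.d.\ with density $P^{(2)}_{\theta_0}(z)=c_{\theta_0}(x^1,y^1)\,c_{\theta_0}(x^2,y^2)$. Regard $\widetilde S(z,\theta):=S(x^1,y^1,x^2,y^2,c_\theta)$ as an ordinary one-point score on $([0,1]^2)^2$ for the parametric model $\{P^{(2)}_\theta=c_\theta\otimes c_\theta\}$. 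Under this identification the estimator $\hat\theta$ of the corollary is exactly an M-estimator of the form~(\ref{eq:estimator}) built from the $N$ points $Z_i$ and the score $\widetilde S$, so it suffices to verify the hypotheses of Theorem~\ref{strictly_proper} for $\widetilde S$.

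First I would check that the population risk $\theta\mapsto\widetilde S(P^{(2)}_{\theta_0},\theta)$ is uniquely minimized at $\theta_0$. By definition of the expected multi-point score, $\widetilde S(P^{(2)}_{\theta_0},\theta)=S(c_{\theta_0},c_\theta)$, so Theorem~\ref{theorem:main} gives $\widetilde S(P^{(2)}_{\theta_0},\theta)\ge\widetilde S(P^{(2)}_{\theta_0},\theta_0)$ for every $\theta$, and Theorem~\ref{theorem:strictly-proper} shows that equality forces $c_\theta=c_{\theta_0}$. The identifiability hypothesis on $h_1,\dots,h_k$ (linear independence modulo additive functions) then yields $\theta=\theta_0$. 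Hence $\widetilde S$ is strictly proper relative to the model and the minimizer is unique and well separated in $\theta$.

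Granting the regularity conditions, Theorem~\ref{strictly_proper} applied with sample size $N$ in place of $n$ gives $\hat\theta\to\theta_0$ almost surely and $\sqrt N(\hat\theta-\theta_0)\xrightarrow{\rm d}N(0,J^{-1}VJ^{-1})$ with $J=E_{\theta_0}[\nabla_\theta\nabla_\theta^\top\widetilde S]$ and $V=E_{\theta_0}[(\nabla_\theta\widetilde S)(\nabla_\theta\widetilde S)^\top]$. Since $N=\lfloor n/2\rfloor\to\infty$ and $n/N\to2$, these also yield $\hat\theta\to\theta_0$ a.s.\ and $\sqrt n(\hat\theta-\theta_0)\xrightarrow{\rm d}N(0,2J^{-1}VJ^{-1})$, which is the assertion of the corollary.

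The remaining work is to confirm the regularity conditions of Theorem~\ref{strictly_proper}, and the only non-mechanical point there is nonsingularity of $J$. Here I would exploit the explicit form of the score along the model: because the normalizing functions cancel in the ratio $q^{11}q^{22}/(q^{11}q^{22}+q^{12}q^{21})$, one has $\widetilde S(z,\theta)=\log(1+e^{-\theta^\top\Delta(z)})$ with $\Delta_l(z)=h_l(x^1,y^1)+h_l(x^2,y^2)-h_l(x^1,y^2)-h_l(x^2,y^1)$. Consequently $\nabla_\theta\nabla_\theta^\top\widetilde S=\sigma'(\theta^\top\Delta)\,\Delta\Delta^\top\succeq0$, where $\sigma$ is the logistic function and $\sigma'>0$; in particular $\widetilde S$ is smooth and convex in $\theta$, and $J=E_{\theta_0}[\sigma'(\theta_0^\top\Delta)\,\Delta\Delta^\top]$. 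This matrix is positive definite unless $\sum_l c_l\Delta_l\equiv0$ for some $c\ne0$; but $\sum_l c_l\Delta_l$ is the mixed difference $H(x^1,y^1)-H(x^1,y^2)-H(x^2,y^1)+H(x^2,y^2)$ of $H:=\sum_l c_l h_l$, and its identical vanishing forces $H(x,y)=A(x)+B(y)$ (fix $(x^2,y^2)$), contradicting linear independence modulo additive functions. Finiteness of $V$ and the dominated-convergence and smoothness conditions are immediate from the bounded log-sum-exp form of $\widetilde S$ and the integrability of $h_1,\dots,h_k$ on $[0,1]^2$, so the corollary follows.
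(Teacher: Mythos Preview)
Your argument is correct and follows the same route as the paper, which simply observes that the corollary is an immediate consequence of Theorem~\ref{strictly_proper} once strict propriety (Theorem~\ref{theorem:strictly-proper}) is in hand. Your explicit reduction of the two-point score to a one-point score on $([0,1]^2)^2$ and your verification that $J$ is nonsingular via the linear-independence-modulo-additive-functions hypothesis spell out details the paper leaves implicit.
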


% \begin{align*}
% 	S(x^1,y^1,x^2,y^2,q)=-\log \left( \frac{q^{11} q^{22}}{q^{11}q^{22}+q^{12} q^{21}} \right)
% \end{align*}
% is $2$-points local generally $0$-homogeneous strictly proper. 
% If we take enough data points in the parameter estimation, it converges to the true parameters.

Next we point out that the estimation is a convex optimization problem.
Here, we use the vector notation $\bm\theta=(\theta_i)_{i=1}^k$ for convenience.

\begin{theorem}
    Consider a minimum information copula model
    \[
		q(x,y;\bm{\theta})=\exp\left( \bm{\theta}^{T} \bm{h}(x,y)+a(x)+b(y) \right).
	\]
	Suppose that $\bm{H}_1,\ldots,\bm{H}_N\in\mathbb{R}^k$ are linearly independent, where
	\[
		\bm{H}_i=\bm{h}_i^{12}+\bm{h}_i^{21}-\bm{h}_i^{22}-\bm{h}_i^{11},
		\quad \bm{h}_i^{\alpha\beta}=\bm{h}(x_i^\alpha,y_i^\beta).
	\]
	Then, the empirical score $\hat{S}(\bm\theta)$ based on the conditional KL score
	is strictly convex with respect to $\bm{\theta}$.
\end{theorem}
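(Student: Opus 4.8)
The plan is to compute the empirical score explicitly as a function of $\bm\theta$, exploiting general homogeneity so that the normalizing functions $a$ and $b$ drop out, and then show that the resulting expression is a sum of strictly convex functions of $\bm\theta$. First I would fix a group index $i$ and substitute $q(x,y;\bm\theta)=\exp(\bm\theta^T\bm h(x,y)+a(x)+b(y))$ into the per-pair score $S(x_i^1,y_i^1,x_i^2,y_i^2,q)$. In the ratio $q^{11}q^{22}/(q^{11}q^{22}+q^{12}q^{21})$ the factors $e^{a(x)}$ and $e^{b(y)}$ cancel between numerator and the two terms of the denominator, leaving only the dependence through $\bm\theta^T(\bm h_i^{11}+\bm h_i^{22})$ and $\bm\theta^T(\bm h_i^{12}+\bm h_i^{21})$. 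Writing $t_i = \bm\theta^T(\bm h_i^{12}+\bm h_i^{21}-\bm h_i^{11}-\bm h_i^{22}) = \bm\theta^T\bm H_i$, a short computation gives
\begin{align*}
S(x_i^1,y_i^1,x_i^2,y_i^2,q) = \log\bigl(1+e^{t_i}\bigr) = \log\bigl(1+e^{\bm\theta^T\bm H_i}\bigr).
\end{align*}
Hence $\hat S(\bm\theta) = \frac1N\sum_{i=1}^N \log(1+e^{\bm\theta^T\bm H_i})$, which is exactly the logistic-regression-type objective.

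Next I would establish convexity. The scalar function $u\mapsto\log(1+e^u)$ (the softplus function) is smooth with second derivative $e^u/(1+e^u)^2 = \sigma(u)(1-\sigma(u)) > 0$, so it is strictly convex on $\mathbb R$. Therefore each summand $\bm\theta\mapsto \log(1+e^{\bm\theta^T\bm H_i})$ is convex, being the composition of the convex function softplus with the linear map $\bm\theta\mapsto\bm\theta^T\bm H_i$, and $\hat S$ is convex as a nonnegative combination. For strictness, I would compute the Hessian directly:
\begin{align*}
\nabla^2\hat S(\bm\theta) = \frac1N\sum_{i=1}^N \sigma(\bm\theta^T\bm H_i)\bigl(1-\sigma(\bm\theta^T\bm H_i)\bigr)\,\bm H_i\bm H_i^T,
\end{align*}
a sum of positive-semidefinite rank-one matrices with strictly positive weights. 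For any nonzero $\bm v\in\mathbb R^k$, $\bm v^T\nabla^2\hat S(\bm\theta)\bm v = \frac1N\sum_i w_i(\bm v^T\bm H_i)^2$ with $w_i>0$, and this vanishes only if $\bm v^T\bm H_i=0$ for every $i$. Since $\bm H_1,\ldots,\bm H_N$ are assumed linearly independent (in particular they span $\mathbb R^k$, so $N\geq k$), the only such $\bm v$ is $\bm 0$. Thus the Hessian is positive definite everywhere, and $\hat S$ is strictly convex.

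The one place requiring care — though it is more bookkeeping than a genuine obstacle — is the cancellation of the normalizing functions in the first step: one must check that $a$ and $b$ really factor out of all three products $q^{11}q^{22}$, $q^{12}q^{21}$, and their sum, which follows because $q^{11}q^{22}$ and $q^{12}q^{21}$ share the common factor $e^{a(x_i^1)+a(x_i^2)+b(y_i^1)+b(y_i^2)}$. After that, the identification with softplus and the Hessian computation are routine, and the linear-independence hypothesis is used exactly once, to pass from positive semidefiniteness to positive definiteness.
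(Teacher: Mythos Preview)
Your proof is correct and follows essentially the same route as the paper: reduce the empirical score to $\hat S(\bm\theta)=\frac{1}{N}\sum_i\log(1+e^{\bm\theta^{T}\bm H_i})$ by cancelling the normalizing functions, then compute the Hessian as a positively weighted sum of the rank-one matrices $\bm H_i\bm H_i^{T}$ and invoke the hypothesis on the $\bm H_i$ for positive definiteness. One small quibble with your parenthetical: linear independence of $N$ vectors in $\mathbb R^{k}$ gives $N\le k$, not $N\ge k$; what the Hessian argument actually needs is that the $\bm H_i$ \emph{span} $\mathbb R^{k}$, which together with linear independence forces $N=k$ --- the paper glosses over this point too.
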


\begin{proof}[Proof]
%    Let $q_i^{kl}=q(x_i^k,y_i^l)$.
	It is easy to see
	\begin{align*}
	 \hat{S}
	 &= \frac{1}{N}\sum_{i=1}^{N} \left\{-\log\left(
	 \frac{q_i^{11}q_i^{22}}{q_i^{11}q_i^{22}+q_i^{12}q_i^{21}}
	 \right)\right\}
	 \\
	 &= \frac{1}{N}\sum_{i=1}^{N} \log \left( \mathrm{e}^{\bm{\theta}^{T} \bm{H}_i}+1 \right).
	\end{align*}
	Then, the gradient vector is
	\begin{align}
	\nabla_{\bm\theta} \hat{S}
	&=\frac{1}{N}\sum_{i=1}^{N}\frac{\mathrm{e}^{\bm{\theta}^{T} \bm{H}_i}}{\mathrm{e}^{\bm{\theta}^{T} \bm{H}_i}+1} \bm{H}_i \nonumber \\
	&=\frac{1}{N}\sum_{i=1}^{N}\frac{\bm{H}_i}{\mathrm{e}^{-\bm{\theta}^{T}  \bm{H}_i}+1} \label{partial_theta_S}.
	\end{align}
	The Hessian matrix
	\begin{align*}
	\frac{1}{N}\sum_{i=1}^{N}\frac{\mathrm{e}^{-\bm{\theta}^{T}  \bm{H}_i}}{(\mathrm{e}^{-\bm{\theta}^{T}  \bm{H}_i}+1)^2}\bm{H}_i \bm{H}_i^T,
	\end{align*}
	is positive definite under the assumption.
	Therefore, the score is strictly convex with respect to $\bm\theta$.
\end{proof}

% Therefore, the score
% \begin{align*}
% S(x^1,y^1,x^2,y^2,q)=-\log \left( \frac{q^{11} q^{22}}{q^{11}q^{22}+q^{12} q^{21}} \right)
% \end{align*}
% is $2$-points local generally $0$-homogeneous strictly proper.
The estimation can be operated easily with the gradient method. 

%Therefore, this score is often referred to as the \rm{CKL (Conditional Kullback-Leibler)}-score in this paper.

\section{Numerical experiments}
\subsection{Experiments of Gaussian copulas}
\label{gaussian_copula_exp}
\subsubsection{Setting}
\label{setting}
The normalizing function of a minimal information copula cannot be obtained in general.
In this subsection, we consider the Gaussian copula, which is one of the few examples of minimal information copulas for which the normalization function can be obtained in a  closed form.

Another reason for experimenting with a Gaussian copula is that the data can be easily generated by a variable transformation of the $2$-dimensional normal distribution, and the maximum likelihood estimation (MLE) can be compared with the proposed method.
In Subsection \ref{general numerical experiments}, we conduct numerical experiments for general minimum information copulas for which normalizing functions cannot be obtained.

The Gaussian copula is a multidimensional normal distribution with its marginal distributions converted to uniform distributions.
%\begin{definition}[$2$-dimensional Gaussian copulas]
	The density function of the $2$-dimensional Gaussian copula is 
	\begin{align*}
%		&c(x,y|\theta)=
		\frac{1}{2 \pi (1-\rho^2)^{\frac{1}{2}}\phi(\xi)\phi(\eta)}\exp \left. \left( - \frac{1}{2(1-\rho^2)} (\xi^2-2 \rho \xi \eta + \eta^2)  \right) \right|_{\xi=\Phi^{-1}(x),\eta=\Phi^{-1}(y)}
	\end{align*}
	for $(x,y)\in(0,1)^2$, where $\phi$ and $\Phi$ are the density function and cumulative distribution function of the $1$-dimensional standard normal distribution. The parameter $\rho$ is the correlation coefficient of the Gaussian variables $\xi$ and $\eta$.

	As pointed out by \cite{jansen1997maximum}, the Gaussian copula is in fact a minimum information copula with 
	\begin{align*}
		\theta=\frac{\rho}{1-\rho^2}
	\end{align*}
	and
	\begin{align*}
		&h(x,y)=\Phi^{-1}(x)\Phi^{-1}(y).
	\end{align*}
	The normalizing functions are
	\begin{align*}
		&\exp\{a(x)\}=\frac{1}{\sqrt{2\pi} (1-\rho^2)^{\frac{1}{4}}\phi(\xi)}\exp \left. \left( -\frac{\xi^2}{2(1-\rho^2)} \right) \right|_{\xi=\Phi^{-1}(x)}
	\end{align*}
	and $\exp\{b(y)\}=\exp\{a(y)\}$.
% 	\begin{align*}
% 		&\exp\{b(y)\}=\frac{1}{\sqrt{2\pi} (1-\rho^2)^{\frac{1}{4}}\phi(\eta)}\exp \left. \left( -\frac{\eta^2}{2(1-\rho^2)} \right) \right|_{\eta=\Phi^{-1}(y)}.
% 	\end{align*}
	The parameters $\theta$ and $\rho$ have one-to-one correspondence as
	\[
	\rho=\frac{2\theta}{1+\sqrt{1+4\theta^2}}.
    \]
%\end{definition}

Thus, the density function of the Gaussian copula can be expressed relatively simply. In the following, the procedure of the numerical experiment is explained in detail.

\begin{setting}
	\begin{enumerate}
		\item First, we generate $2N$ data of $2$-dimensional normal distribution with mean vector$
		\begin{bmatrix}
		0 \\
		0
		\end{bmatrix}
		$and covariance matrix$
		\begin{bmatrix}
		1    & \rho \\
		\rho & 1
		\end{bmatrix}.
		$
		Consider the first $N$ data$
		\begin{bmatrix}
		\xi_i^1 \\
		\eta_i^1
		\end{bmatrix}
		$
		as obtained from $\text{system}$ $1$ and the other $N$ data$
		\begin{bmatrix}
		\xi_i^2 \\
		\eta_i^2
		\end{bmatrix}
		$
		as obtained from $\text{system}$ $2$.
		\item 
% 		Let error function
% 		\begin{align*}
% 		\text{erf}(x)=\frac{2}{\sqrt{\pi}}\int_{0}^{x} \text{e}^{-t^2} \mathrm{d}t.
% 		\end{align*}
% 		With the sample of $2$-dimensional normal distribution$
% 		\begin{bmatrix}
% 		\xi_i^j \\
% 		\eta_i^j
% 		\end{bmatrix}
% 		(j=1,2)
% 		$
		Take the variable transformation
		\begin{align}
		\label{transform}
		\begin{bmatrix}
		x_i^j \\
		y_i^j 
		\end{bmatrix}
		=\begin{bmatrix}
		\Phi(\xi_i^j)\\
		\Phi(\eta_i^j)
		\end{bmatrix}
		=\begin{bmatrix}
		\frac{1}{2}\left(1+\text{erf}\left( \frac{\xi_i^j}{\sqrt{2}} \right)\right) \\
		\frac{1}{2}\left(1+\text{erf}\left( \frac{\eta_i^j}{\sqrt{2}} \right)\right)
		\end{bmatrix}
		\end{align}
		($i=1,\ldots,N;j=1,2$)
		in order to obtain $2N$ sample of the Gaussian copula, where $\text{erf}(x)=(2/\sqrt{\pi})\int_{0}^{x} \text{e}^{-t^2} \mathrm{d}t$ is the error function.
		\item 
		%Estimate the parameter $\theta=\frac{\rho}{1-\rho^2}$ with the proposed score.
		Set the initial value $\theta=\theta_0$ and iterative step $\mathrm{d}\theta$ properly. 
		\item Calculate the gradient of the empirical score
		\begin{align*}
		    \hat{S}:=\frac{1}{N}\sum_{i=1}^{N}S(x_i^1,y_i^1,x_i^2,y_i^2,q)=-\frac{1}{N}\sum_{i=1}^{N}\log \left( \frac{q^{11}_{i}q^{22}_{i}}{q^{11}_{i}q^{22}_{i}+q^{12}_{i}q^{21}_{i}} \right).
        \end{align*}
Substitute the sample $(x_i^j,y_i^j)(i=1,\cdots,N;j=1,2)$ for $(\ref{partial_theta_S})$:
        \begin{align*}
            \frac{\mathrm{d}}{\mathrm{d}\theta}\hat{S}=\frac{1}{N}\sum_{i=1}^{N}\frac{{H}_i}{\mathrm{e}^{-\theta {H}_i}+1}.
        \end{align*}
%			  and calculate the iterative direction $-\frac{\mathrm{d}\hat{S}}{\mathrm{d}\theta}$.
		\item 
		%Stop the iteration $\theta_{i+1}=\theta_{i}-\frac{\mathrm{d}\hat{S}}{\mathrm{d}\theta}\mathrm{d}\theta$ when the gradient $\frac{\mathrm{d}\hat{S}}{\mathrm{d}\theta}$ is sufficiently near to $0$. 
		If $|\mathrm{d}\hat{S}/\mathrm{d}\theta|$ is sufficiently small, output $\theta$ as the estimator $\hat\theta$. Otherwise, $\theta\leftarrow \theta-(\mathrm{d}\hat{S}/\mathrm{d}\theta)\mathrm{d}\theta$ and go to Step 4.
		%\item $\theta_i$ that makes $\frac{\mathrm{d}\hat{S}}{\mathrm{d}\theta}$ sufficiently near to $0$, is the estimator $\hat{\theta}$.
	\end{enumerate}
\end{setting}

The experiments of the MLE  use the empirical score of the \rm{KL}-score
\begin{align*}
	\hat{S}_{\rm{KL}}=-\frac{1}{N}\sum_{i=1}^{N}\log{\left(q^{11}_{i}q^{22}_{i}\right)}
\end{align*}
instead of the conditional KL score in Step $4$. 
The other steps are the same.

\subsubsection{Results}
\label{result}
We describe the results of numerical calculations according to the experimental procedure in Subsection $\ref{setting}$.

First, $4000$ sample points of $2$-dimensional normal distribution with mean vector$
\begin{bmatrix}
0 \\
0
\end{bmatrix}
$ and covariance matrix$
\begin{bmatrix}
1    & \rho=0.7 \\
\rho=0.7 & 1
\end{bmatrix}
$
were obtained.
Then, the variable transformation $(\ref{transform})$ was operated in order to get the sample of Gaussian copula with parameter $\theta=\frac{\rho}{1-\rho^2}=1.372549$.
The larger the number of data, the more the estimation error converges to $0$. To see the degree of convergence, the estimation error was calculated by changing the number of data used.
Using $N=40,50,\cdots,1990,2000$ pairs of data, we substituted the data into the proposal score and obtained the optimal solution $\hat{\theta}$ as the estimator.
The estimation error was calculated as the absolute value of the difference from the true parameter $\theta=1.372549$.
In addition, maximum likelihood estimation was performed using the same data with \rm{KL}-scores.
The above experiment was repeated up to $100$ times and the estimation errors were averaged.

The estimation results are shown in Figure~\ref{Gaussian_copula_error_MLE_CKL}.
The results are also plotted on the logarithmic graph of both axes and linearly fitted in Figure~\ref{Gaussian_copula_log_error_MLE_CKL}.
Linear fitting is the least-squares fitting of $a,b$ of $y=x^a \exp(b)$ to the data.
The red line is the estimation error of the maximum likelihood estimation (\rm{KL}-score) and the green is the estimation error of the proposal score (\rm{CKL}-score).
The blue line is the estimation error fitting of the maximum likelihood estimation (\rm{KL}-score), where the coefficients of the fitting are
\begin{align*}
&a=-0.517919\\
&b=0.445423.
\end{align*}
The purple line is the estimation error fitting of the proposed score (\rm{CKL}-score), where
\begin{align*}
&a=-0.49438\\
&b=0.97278.
\end{align*}
The speed of convergence of the error is roughly $\frac{1}{\sqrt{N}}$ for the number $N$ of data since the blue and purple $a$ in the fitting are close to $-0.5$.
This result is consistent with Corollary~\ref{corollary:asymptotic}.
The maximum likelihood estimation has better results with respect to error convergence than the proposed method, as expected.
However, when estimating parameters with minimum information copulas, the maximum likelihood estimation is only possible with the Gaussian copulas used in this experiment.
In most other cases, the maximum likelihood estimation cannot be performed because the normalizing functions cannot be obtained.
In such cases, the greatest strength of the proposed score is that it can estimate with the same accuracy.

\begin{figure}[htbp]
	\centering
	\includegraphics[width=10cm,height=6.5cm]{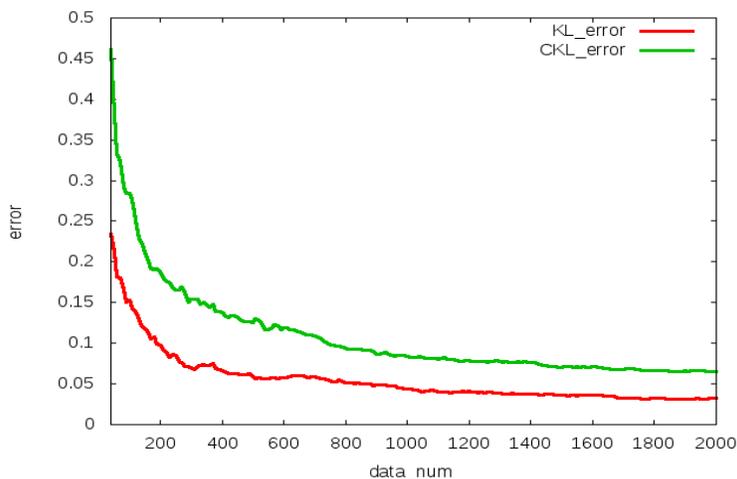}
	\caption{The estimation result of Gaussian copulas. The mean absolute error (vertical axis) with respect to the number of data (horizontal axis) is shown. The red line is the KL score (MLE) and green line is the CKL score (proposed)}
	\label{Gaussian_copula_error_MLE_CKL}
\end{figure}

\begin{figure}[htbp]
	\centering
	\includegraphics[width=10cm,height=6.5cm]{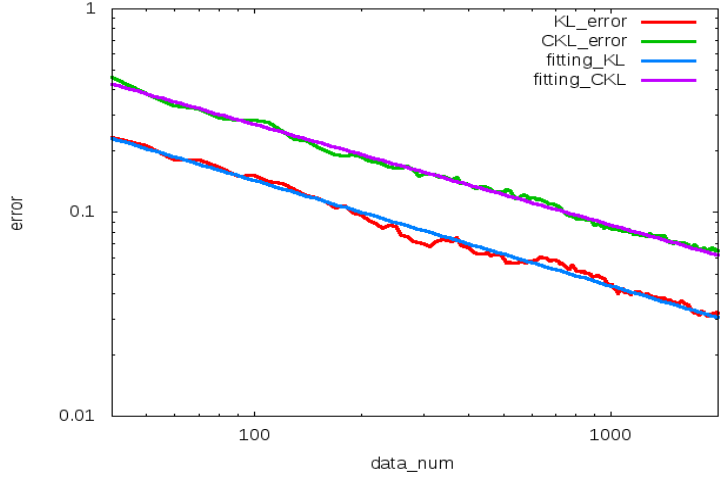}
	\caption{The estimation result of Gaussian copulas. The mean absolute error (vertical axis) with respect to the number of data (horizontal axis) is shown in logarithmic scale. The red line is the KL score (MLE), green line is the CKL score (proposed), and the blue and purple lines are the linear fitting of the two lines.}
	\label{Gaussian_copula_log_error_MLE_CKL}
\end{figure}

\subsection{Experiments of general minimum information copulas}
\label{general numerical experiments}
\subsubsection{Setting}

% In this experiment, MLE is not possible, so the results are based on the proposed method only. In addition, since sampling itself is not easy, we generate data from the approximate sampling method \cite{Sei&Yano}.

Section \ref{gaussian_copula_exp} described the numerical experimental results of the Gaussian copulas. In this section, we present numerical experiments on general minimum information copulas. Since the normalizing functions of the minimum information copula are not generally obtainable, exact sampling is difficult. Thus, the generation of data itself is problematic before parameter estimation. In this paper, we generate data using the approximate sampling method proposed by \cite{Sei&Yano}. The sampling procedure is described below.
%For more details, please refer to \cite{Sei&Yano}.

\begin{setting}[The approximate sampling method]
    \begin{enumerate}
		\item First, we decide the function $h(x,y)$ and parameter $\theta \in \mathbb{R}$. Once these are determined, the minimum information copula is uniquely determined, although it cannot be written explicitly.
		\item Then, generate a sufficiently large number of independent and identically distributed (i.i.d.) data $
		\begin{bmatrix}
		x_i \\
		y_i
		\end{bmatrix}(i=1,\cdots,2N)$ from the uniform distribution on $[0,1]^2$.
		\item Choose $i,j$ that satisfy $1\leq i <j \leq 2N$ randomly and flip them with probability
		\begin{align*}
			\rho
			&=\frac{q(x_i,y_j)q(x_j,y_i)}{q(x_i,y_i)q(x_j,y_j)+q(x_i,y_j)q(x_j,y_i)}\\
			&=\frac{\exp\{ \theta(h(x_i,y_j)+h(x_j,y_i)) \}}{\exp \{ \theta(h(x_i,y_i)+h(x_j,y_j))+\exp\{\theta (h(x_i,y_j)+h(x_j.y_i)) \}}
		\end{align*}
		In other words, change the pair $
		\begin{bmatrix}
		x_i \\
		y_i
		\end{bmatrix}$ and $
		\begin{bmatrix}
		x_j \\
		y_j
		\end{bmatrix}$ to the pair$
		\begin{bmatrix}
		x_i \\
		y_j
		\end{bmatrix}$ and $
		\begin{bmatrix}
		x_j \\
		y_i
		\end{bmatrix}$.
		\item Repeat step $3$ enough times. Then, the $2N$ data are approximately an i.i.d.\ sample from the minimum information copula decided in step $1$.
		\item We consider the first $N$ data$
		\begin{bmatrix}
		x_i^1 \\
		y_i^1
		\end{bmatrix}$
		obtained from system $1$ and the other $N$ data $
		\begin{bmatrix}
		x_i^2 \\
		y_i^2
		\end{bmatrix}
		$obtained from system $2$.
	\end{enumerate}
\end{setting}

Now we have an approximate sample from the given minimum information copulas and can perform parameter estimation using the proposed scores.
Once the initial value $\theta_0$ and the iterative step $\mathrm{d}\theta$ are determined appropriately, the rest is done in the same way as in step $4$ and $5$ of Subsection~\ref{setting}. The above numerical experiments are repeated up to $100$ times from sampling to estimation, and the average of the absolute errors is obtained.

\subsubsection{Result}
First, to roughly check the behavior of the approximate sampling, we conduct the experiment with Gaussian copulas again. 
%For a detailed description of the sampling technique, please refer to \cite{Sei&Yano}.
The exact and approximate sampling were used, respectively, to obtain a sample of Gaussian copulas with the same parameters $\theta=\frac{\rho}{1-\rho^2}=1.372549$ as in Subsection~\ref{result}.
Using $N=40,50,\cdots,1990,2000$ pairs of data, we substituted the data into the proposed score and obtained the optimal solution $\hat{\theta}$ as the estimator.
The estimation error was calculated as the absolute value of the difference from the true parameter $\theta=1.372549$.
In addition, maximum likelihood estimation was performed using the same data with \rm{KL}-scores.
The above experiment was repeated up to $100$ times and the estimation errors were averaged.

% The results of estimating data with variable-transformed $2$-dimensional normal distribution and approximate sampling are shown in Figure\ref{Gaussian_copula_error}.
% The red line is the estimation error for sampling with two-dimensional normal distributions and the maximum likelihood estimation (\rm{KL}-score), and the green line is the estimation error for sampling with two-dimensional normal distributions and proposed score (\rm{CKL}-score).
% The blue line is the estimation error of approximate sampling and the maximum likelihood estimation (\rm{KL}-score), and the purple line is the estimation error of approximate sampling and proposed score (\rm{CKL}-score).

% \begin{figure}[htbp]
% 	\centering
% 	\includegraphics[width=10cm,height=6.5cm]{Gaussian_copula_error.png}
% 	\caption{The estimation of Gaussian copulas: \newline
% 		Comparison of sampling by the $2$-dimensional normal distribution and approximate sampling\newline
% 		Variation of absolute error (vertical axis) with respect to the number of data (horizontal axis)\newline
% 		red: sampling with the $2$-dimensional normal distribution $+$\rm{KL}-score(MLE)\newline
% 		green: sampling with the $2$-dimensional normal distribution $+$proposed score\newline
% 		blue: approximate sampling $+$\rm{KL}-score(MLE)\newline
% 		purple: approximate sampling $+$proposed score}
% 	\label{Gaussian_copula_error}
% \end{figure}

The results of estimating data with exact and approximate sampling are shown in 
Figure~\ref{Gaussian_copula_log_error}, where both axes are in logarithmic scale.
%Re-plotted on a logarithmic graph on both axes is, %and linearly fitted graph is Figure \ref{Gaussian_copula_log_error_fitting}.
The red line shows the estimation error for the exact sampling and the maximum likelihood estimation (\rm{KL}-score), and the green line is the estimation error for the exact sampling and the proposed score (\rm{CKL}-score).
The blue line is the estimation error of the approximate sampling and the maximum likelihood estimation (\rm{KL}-score), and the purple line is the estimation error of the approximate sampling and proposed score (\rm{CKL}-score).

\begin{figure}[htbp]
	\centering
	\includegraphics[width=10cm,height=6.5cm]{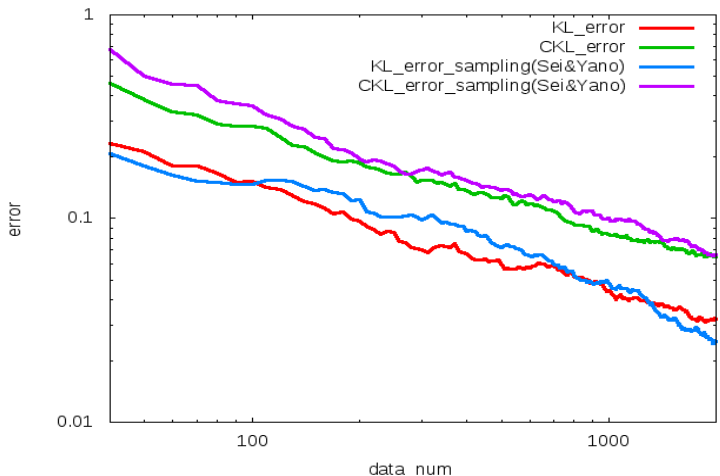}
	\caption{The estimation result of Gaussian copulas with exact and approximate sampling. The mean absolute error (vertical axis) with respect to the number of data (horizontal axis) is shown in logarithmic scale. The four lines mean the exact sampling $+$ KL score (red), the exact sampling $+$ CKL score (green), the approximate sampling $+$ KL score (blue) and the approximate sampling $+$ CKL score (purple).}
	\label{Gaussian_copula_log_error}
\end{figure}

%Linear fitting is the least-squares fitting of $a,b$ of $y=x^a \exp(b)$ to the data.
The result of linear fitting for the four lines is summarized in Table~\ref{table:linear-fitting-Gaussian}.
The figure and table confirm that the approximate sampling seems successfully generating data for the Gaussian copulas.

\begin{table}[htbp]
\caption{The coefficients of linear fitting for the four lines on in Figure~\ref{Gaussian_copula_log_error}.}
\label{table:linear-fitting-Gaussian}
 \begin{center}
 \begin{tabular}{cc|cc}
 score& sampling & $a$& $b$ \\
 \hline
KL & exact &$-0.517919$& 0.445423\\
CKL & exact & $-0.49438$& 0.97278\\
KL & approximate& $-0.492425$& 0.37061\\
CKL & approximate& $-0.561812$& 1.52274
\end{tabular}
\end{center}
\end{table}

% The red line is the error of sampling due to the two-dimensional normal distribution and maximum likelihood estimation (\rm{KL}-score), and the result of linear fitting is
% \begin{align*}
% &a=-0.517919\\
% &b=0.445423.
% \end{align*}
% The green line is the error of sampling due to the two-dimensional normal distribution and proposed score (\rm{CKL}-score), and the result of linear fitting is
% \begin{align*}
% &a=-0.49438\\
% &b=0.97278.
% \end{align*}
% The blue line is the error of approximate sampling and maximum likelihood estimation (\rm{KL}-score), and the result of linear fitting is
% \begin{align*}
% &a=-0.492425\\
% &b=0.37061.
% \end{align*}
% The purple line is the erro of approximate sampling and proposed score (\rm{CKL}-score), and the result of linear fitting is
% \begin{align*}
% &a=-0.561812\\
% &b=1.52274.
% \end{align*}

Since the rough behavior of approximate sampling has been confirmed, we will discuss the results of experiments on sampling and parameter estimation of general minimum information copulas, which was the original purpose of this paper.

We generated $N=2000$ pairs of samples of minimum information copula with parameters $\theta=5.0,10.0$ and function $h(x,y)=xy,x^2y$. We used $N=20,30,40,\cdots,1990,2000$ pairs of them, substituted data into the proposed score, and estimated the optimal solution as $\hat{\theta}$. The estimation error was calculated as the absolute value of the difference from the true parameter $\theta$.
The above experiment was repeated up to $100$ times and the average of the estimation error was taken.

The estimation results with the proposed score for the general minimum information copulas are shown in %Figure\ref{general_copula_error}.
Figure~\ref{general_copula_log_error}, where both axes are in logarithmic scale.
The red line is the estimation error for $\theta=5.0,h(x,y)=xy$ and the green line is the estimation error for $\theta=5.0,h(x,y)=x^2y$.
The blue line is the estimation error for $\theta=10.0,h(x,y)=xy$ and the purple is the estimation error for $\theta=5.0,h(x,y)=x^2y$.

% \begin{figure}[htbp]
% 	\centering
% 	\includegraphics[width=10cm,height=6.5cm]{Gaussian_copula_log_error_fitting.png}
% 	\caption{The estimation of Gaussian copulas (logarithmic graph on both axes): \newline
% 		Comparison of sampling by2-dimensional normal distribution vs. approximate sampling\newline
% 		Variation of absolute error (vertical axis) with respect to the number of data (horizontal axis)\newline
% 		red: The linear fitting of the error with $2$-dimensional normal distribution sampling $+$\rm{KL}-score(MLE)\newline
% 		green: The linear fitting of the error with $2$-dimensional normal distribution sampling $+$proposed score\newline
% 		blue: The linear fitting of the error with approximate sampling $+$\rm{KL}-score(MLE)\newline
% 		purple: The linear fitting of the error with approximate sampling $+$proposed score}
% 	\label{Gaussian_copula_log_error_fitting}
% \end{figure}

% \begin{figure}[htbp]
% 	\centering
% 	\includegraphics[width=10cm,height=7cm]{general_copula_error.png}
% 	\caption{The estimation of general minimum information copulas: \newline
% 		Variation of absolute error (vertical axis) with respect to the number of data (horizontal axis)\newline
% 		red: $\theta=5.0,h(x,y)=xy$\newline
% 		green: $\theta=5.0,h(x,y)=x^2y$\newline
% 		blue: $\theta=10.0,h(x,y)=xy$\newline
% 		purple: $\theta=10.0,h(x,y)=x^2y$
% 	}
% 	\label{general_copula_error}
% \end{figure}

% Re-plotted on a logarithmic graph on both axes is Figure \ref{general_copula_log_error}, and linearly fitted graph is Figure \ref{general_copula_log_error_fitting}.

\begin{figure}[htbp]
	\centering
	\includegraphics[width=10cm,height=7cm]{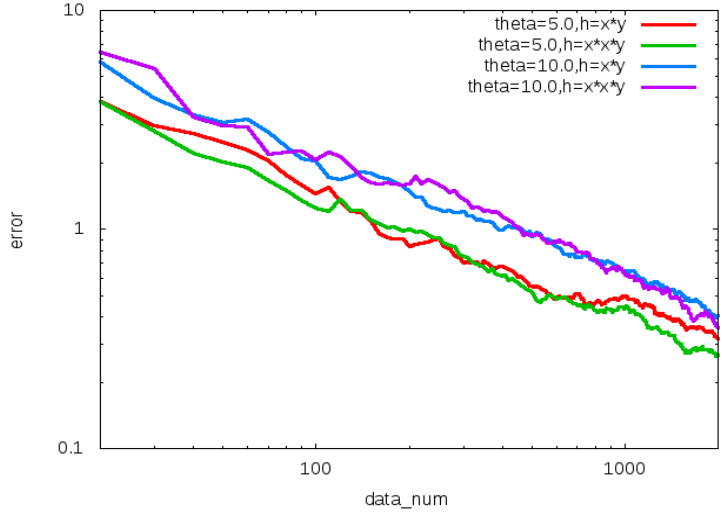}
	\caption{The estimation result of general minimum information copulas. The mean absolute error (vertical axis) with respect to the number of data (horizontal axis) is shown in logarithmic scale. The used parameters are 
	$\theta=5.0,h(x,y)=xy$ (red), 
	$\theta=5.0,h(x,y)=x^2y$ (green),
	$\theta=10.0,h(x,y)=xy$ (blue) and
	$\theta=10.0,h(x,y)=x^2y$ (purple).
	}
	\label{general_copula_log_error}
\end{figure}

The result of linear fitting for the four lines is summarized in Table~\ref{table:linear-fitting-general}.
The proposed method confirms that estimation is possible.

\begin{table}[htbp]
\caption{The coefficients of linear fitting for the four lines on in Figure~\ref{general_copula_log_error}.}
\label{table:linear-fitting-general}
 \begin{center}
 \begin{tabular}{cc|cc}
 $\theta$& $h(x,y)$ & $a$& $b$ \\
 \hline
5.0 & $xy$ & $-0.563269$& 3.02252\\
5.0 & $x^2y$ & $-0.572157$& 2.98715\\
10.0& $xy$& $-0.548056$& 3.30813\\
10.0& $x^2y$& $-0.581245$& 3.52886
\end{tabular}
\end{center}
\end{table}

%Linear fitting is the least-squares fitting of $a,b$ in $y=x^a \exp(b)$ to the data.
% The red line is the estimation error for $\theta=5.0,h(x,y)=xy$ and the result of linear fitting is
% \begin{align*}
% &a=-0.563269\\
% &b=3.02252.
% \end{align*}
% The green line is the estimation error for $\theta=5.0,h(x,y)=x^2y$ and the result of linear fitting is 
% \begin{align*}
% &a=-0.572157\\
% &b=2.98715.
% \end{align*}
% The blue line is the estimation error for $\theta=10.0,h(x,y)=xy$ and the result of linear fitting is 
% \begin{align*}
% &a=-0.548056\\
% &b=3.30813.
% \end{align*}
% The purple line is the estimation error for $\theta=10.0,h(x,y)=x^2y$ and the result of linear fitting is 
% \begin{align*}
% &a=-0.581245\\
% &b=3.52886.
% \end{align*}

% \begin{figure}[htbp]
% 	\centering
% 	\includegraphics[width=10cm,height=7cm]{general_copula_log_error_fitting.png}
% 	\caption{The estimation of general minimum information copulas (logarithmic graph on both axes):\newline
% 		Variation of absolute error (vertical axis) with respect to the number of data (horizontal axis)\newline
% 		red: The linear fitting of $\theta=5.0,h(x,y)=xy$\newline
% 		green: The linear fitting of $\theta=5.0,h(x,y)=x^2y$\newline
% 		blue: The linear fitting of $\theta=10.0,h(x,y)=xy$\newline
% 		purple: The linear fitting of $\theta=10.0,h(x,y)=x^2y$
% 	}
% 	\label{general_copula_log_error_fitting}
% \end{figure}

\section{Conclusions}
In this paper, we propose a generally homogeneous $2$-point local strictly proper score for minimum information copulas.
The greatest strength of this score is that it can be calculated without normalizing functions, which are difficult to compute for minimum information copulas.
The estimator based on the score is asymptotically consistent, and numerical experiments have confirmed that the behavior is consistent with the theory.
Future work includes theoretical computation of asymptotic variance.
%The theoretical guarantee of asymptotic normality is important to know how fast the error variance converges.

In addition, in this paper, we considered the data
\begin{align*}
	(x_1,y_1),\cdots,(x_n,y_n)
\end{align*}
as $N=\lfloor n/2\rfloor$ pairs
\begin{align*}
	(x^1_1,y^1_1,x^2_1,y^2_1),\cdots,(x^1_N,y^1_N,x^2_N,y^2_N).
\end{align*}
However, there are $n(n-1)/2$ pairs in the data: $(x_i,y_i,x_j,y_j)$ for $1\leq i<j\leq n$.
% The number of combinations that pair two data among $N$ data is $\frac{N(N-1)}{2}$ as follows: 
% \begin{align*}
% 	&(x^1,y^1,x^2,y^2),\cdots,(x^1,y^1,x^{N-1},y^{N-1}),(x^1,y^1,x^N,y^N),\\
% 	&(x^2,y^2,x^3,y^3),\cdots,(x^2,y^2,x^N,y^N),\\
% 	&\cdots\\
% 	&(x^{N-1},y^{N-1},x^N,y^N).
% \end{align*}
In terms of extracting the full information of the data, it would be better to consider the empirical score of all combinations
\begin{align*}
	&\hat{S}=\frac{2}{n(n-1)}\sum_{1\leq i < j\leq n}\log\left( \frac{q_{ii}q_{jj}}{q_{ii}q_{jj}+q_{ij}q_{ji}} \right),\quad q_{ij}:=q(x_i,y_j).
\end{align*}
Comparison of the accuracy and computational speed of these two approaches are quite interesting.
The theory of U-statistics (e.g.\ Chapter 12 of \cite{van2000asymptotic}) may be useful for analysis of the modified score.

Furthermore, other than the proposed score, scores with general homogeneity, propriety, and multi-point locality should also be discussed.
In this paper, the score was based on the form of the KL score, but it is not yet known whether a score with similar properties can be created by mimicking the form of the \rm{Hyv\"{a}rinen} score, for example. It is also possible that scores with similar properties can be created using a form that does not take logarithms.
Thus, the authors believe that there is still a way to create such a score.
By knowing the entire set of scores with these properties, we can see if there are scores among them that are even more accurate or that satisfy the properties we want to add.

\section*{Acknowledgments}
We would like to thank Keisuke Yano for helpful comments.
%We would also like to thank Fumiyasu Komaki, Hiromichi Nagao, Teppei Ogiwara, and other members of the Laboratory. 

%%%%%%%%%%%%%%%%%%%%%%%%%%%%%%%%%%%%%%%%%%%%%%
%% Supplementary Material, if any, should   %%
%% be provided in {supplement} environment  %%
%% with title and short description.        %%
%%%%%%%%%%%%%%%%%%%%%%%%%%%%%%%%%%%%%%%%%%%%%%
%\begin{supplement}
%\stitle{???}
%\sdescription{???.}
%\end{supplement}

%\bibliographystyle{myplain}
%\bibliography{reference}       % Bibliography file (usually '*.bib')

%% or include bibliography directly:
% \begin{thebibliography}{}
% \bibitem{b1}
% \end{thebibliography}

\end{document}